\theoremstyle{definition}
\newtheorem{theorem}[equation]{Theorem}
\newtheorem{proposition}[equation]{Proposition}
\newtheorem{corollary}[equation]{Corollary}
\newtheorem{lemma}[equation]{Lemma}
\newtheorem{remark}[equation]{Remark}
\DeclareMathOperator{\id}{id}
\DeclareMathOperator{\diag}{diag}
\DeclareMathOperator{\poly}{poly}
\DeclareMathOperator{\SWAP}{SWAP}
\DeclareMathOperator{\tG}{G}
\DeclareMathOperator{\tH}{H}
\DeclareMathOperator{\tZ}{Z}
\newcommand{\abs}[1]{\left\lvert#1\right\rvert}
\newcommand{\eps}{\varepsilon}
\newcommand{\gb}{\beta}
\newcommand{\g}{\gamma}
\newcommand{\gd}{\delta}
\newcommand{\gw}{\omega}
\newcommand{\gl}{\lambda}
\newcommand{\gL}{\Lambda}
\newcommand{\gs}{\sigma}
\newcommand{\gT}{\Theta}
\newcommand{\bP}{\mathbb{P}}
\newcommand{\cE}{\mathcal{E}}
\newcommand{\cO}{\mathcal{O}}
\DeclareFontFamily{U}{jkpmia}{}
\DeclareFontShape{U}{jkpmia}{m}{it}{<->s*jkpmia}{}
\DeclareFontShape{U}{jkpmia}{bx}{it}{<->s*jkpbmia}{}
\DeclareMathAlphabet{\mathfrak}{U}{jkpmia}{m}{it}
\SetMathAlphabet{\mathfrak}{bold}{U}{jkpmia}{bx}{it}
\newcommand{\fA}{\mathfrak{A}}
\newcommand{\fB}{\mathfrak{B}}
\newcommand{\fC}{\mathfrak{C}}
\newcommand{\fD}{\mathfrak{D}}
\newcommand{\fE}{\mathfrak{E}}
\newcommand{\fF}{\mathfrak{F}}
\newcommand{\fG}{\mathfrak{G}}
\newcommand{\norm}[1]{\left\|#1\right\|}
\newcommand{\lpr}[1]{\left(#1\right)}
\newcommand{\lbr}[1]{\left[#1\right]}
\newcommand{\tr}{\operatorname{tr}}
\newcommand{\appref}[1]{\hyperref[#1]{Appendix \ref{#1}}}
\newcommand{\claimref}[1]{\hyperref[#1]{Claim \ref{#1}}}
\newcommand{\conjref}[1]{\hyperref[#1]{Conjecture \ref{#1}}}
\newcommand{\corref}[1]{\hyperref[#1]{Corollary \ref{#1}}}
\newcommand{\defref}[1]{\hyperref[#1]{Definition \ref{#1}}}
\newcommand{\rmkref}[1]{\hyperref[#1]{Remark \ref{#1}}}
\newcommand{\exref}[1]{\hyperref[#1]{Example \ref{#1}}}
\newcommand{\figref}[1]{\hyperref[#1]{Figure \ref{#1}}}
\newcommand{\hwref}[1]{\hyperref[#1]{Homework \ref{#1}}}
\newcommand{\lemref}[1]{\hyperref[#1]{Lemma \ref{#1}}}
\newcommand{\probref}[1]{\hyperref[#1]{Problem \ref{#1}}}
\newcommand{\propref}[1]{\hyperref[#1]{Proposition \ref{#1}}}
\newcommand{\secref}[1]{\hyperref[#1]{\S\ref{#1}}}
\newcommand{\tabref}[1]{\hyperref[#1]{Table \ref{#1}}}
\newcommand{\thmref}[1]{\hyperref[#1]{Theorem \ref{#1}}}
\newcommand{\half}{\frac{1}{2}}
\newcommand{\db}[1]{\left\llbracket#1\right\rrbracket}
\newcommand{\bgl}{\boldsymbol\lambda}
\newcommand{\bu}{{\boldsymbol u}}
\newcommand{\QMA}{$\mathsf{QMA}$}
\newcommand{\NP}{$\mathsf{NP}$}
\newcommand{\BQP}{$\mathsf{BQP}$}
\newcommand{\DQC}{$\mathsf{DQC}$}
\renewcommand{\P}{$\mathsf{P}$}
\let\expandafter\originald\csname\encodingdefault\string\d\endcsname
\DeclareRobustCommand*\d
\ifmmode\mathop{}\!\mathrm{d}\else\expandafter\originald\fi}
\newcommand{\arxiv}[1]{\href{https://arxiv.org/abs/#1}{arXiv:#1}}
\newcommand\Ground{
\mathbin{\text{\begin{tikzpicture}[circuit ee IEC,yscale=0.6,xscale=0.5]
\draw node[ground,rotate=0,xshift=.65ex] {};
\end{tikzpicture}}}
}
\title{A no-go result for pure state synthesis in the \DQC1 model}
\author{Zachary Stier}
\affil{UC Berkeley Department of Mathematics\\\texttt{zstier@berkeley.edu}}
\date{April 2024}
\begin{document}

\maketitle

\begin{abstract}
    We study the problem of state synthesis in the \DQC1 (One Clean Qubit) model of quantum computation, which provides a single pure qubit and $n$ maximally mixed qubits, and after applying any quantum circuit some subset of the qubits are measured or discarded. In the case of discarding, we show that it is impossible to prepare additional pure qubits, and that it is impossible to prepare very low-temperature Gibbs states on additional qubits. In the case of measurements, we show that the probability of synthesizing $m$ additional qubits is bounded by $2^{1-m}$, and that the probability of preparing low-temperature Gibbs states is bounded by $2^{2-m}$. As a consequence, we give a lower-bound the runtime of a recently studied class of repeated interaction quantum algorithms. The techniques used study states and circuits at the level of entries of their respective density and unitary matrices. 
\end{abstract}

\section{Introduction}

In this paper, we are concerned with the problem of {\em state synthesis}. One is provided a quantum computer with a given initial state and a designated output register which is to be left in some prescribed target state, with the remaining qubits (those not in the output register) to be either measured or discarded, as specified. The task is to design a circuit which, from the given input, leaves the output register in the target state; the procedure of inputting pure ancillae, applying a circuit, and measuring or discarding may be iterated a number of times, replacing the measured or discarded qubits by a prescribed fresh state, and the goal is to design a circuit for each iteration. See Figures \ref{fig:1} and \ref{fig:2}
\begin{figure}
    \centering
    \begin{subfigure}[b]{0.4\textwidth}
        \centering
        \leavevmode
        \Qcircuit @C=1.5em @R=1.5em {
        \lstick{\fA} & {/^a} \qw & \multigate{1}{U} \qw & {/^c} \qw & \qw & \meter & {\fC} \\
        \lstick{\fB} & {/^b} \qw & \ghost{U} & {/^d} \qw & \qw & \qw & {\fD} 
        }
        \caption{A measuring \DQC$a$ machine.}
        \label{fig:1}
    \end{subfigure}
    \begin{subfigure}[b]{0.4\textwidth}
        \centering
        \leavevmode
        \Qcircuit @C=1.5em @R=1.5em {
        \lstick{\fA} & {/^a} \qw & \multigate{1}{U} \qw & {/^c} \qw & \qw & \measuretab{\Ground{}} & {\fC} \\
        \lstick{\fB} & {/^b} \qw & \ghost{U} & {/^d} \qw & \qw & \qw & {\fD} 
        }
        \caption{A discarding \DQC$a$ machine.}
        \label{fig:2}
    \end{subfigure}
    \caption{\DQC\ machines studied here.}
\end{figure}
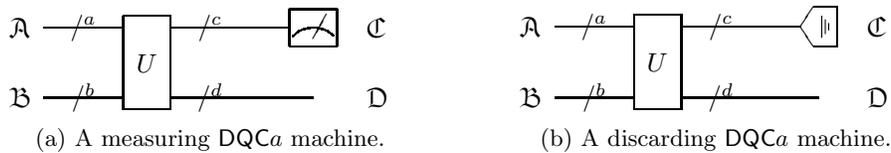
for the singly-iterated setting, where $\fA\sqcup\fB$ is the input register, $\fC$ is the register to be measured or discarded, $\fD$ is the output register, and the task is to find an appropriate $U$. Having efficient means by which to enact state preparation is recognized as central to quantum chemistry \cite{OBK+,LLZ+}, and various examples of algorithms include \cite{ADLH,GTC,LT}. 

One instance of this broad class of problems is state synthesis in \DQC$k$, where the computational model, \DQC$k$, has input with $a=k$ qubits given in the state $\ket{0}\bra{0}_\fA$ and $b=n$ qubits given in the maximally mixed state (i.e.\ the density matrix $\frac{1}{2^n}\id_\fB$), with output on $d=n$ qubits (and $c=k$ qubits measured or discarded). The model \DQC1, also called {\em One Clean Qubit}, is already quite useful; the following application, though not an instance of state preparation, is still of great importance. Given many copies of the controlled-$U$ gate as well as the Hadamard gate $H$, and allowed many runs on the quantum computer, the {\em Hadamard test} (\figref{fig:hadamard}) 
\begin{figure}
    \centering
    \leavevmode
    \Qcircuit @C=1.5em @R=1.5em {
    \ket{0} && \gate{H} & \ctrl{1} & \gate{H} & \meter \\
    \frac{1}{2^n}\id && {/^n} \qw & \gate{U} & \qw & \qw
    }
    \captionsetup{width=.75\textwidth}
    \caption{The Hadamard test, for estimating the trace of the unitary $U$.}
    \label{fig:hadamard}
\end{figure}
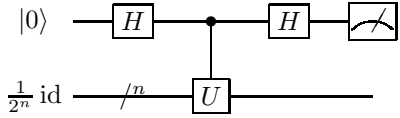
allows one to compute $U$'s trace to precision $\eps$ using $\poly\frac{1}{\eps}$ runs (cf.\ \cite{Lin,KL}). \DQC1 has also found application to computing the Jones polynomial of a knot \cite{SJ}. A nice complexity-theoretic result is that, letting the complexity class \DQC$k$ be the set of decision problems efficiently decidable with a \DQC$k$ machine, \cite{Shepherd} shows that $\mathsf{DQC}1=\mathsf{DQC}\log n$. 

Let a {\em measuring \DQC$k$ machine} be one where $\fC$ is to be measured; there, we will be concerned with probabilities of certain state syntheses. In contrast, letting a {\em discarding \DQC$k$ machine} be one where $\fC$ is to be discarded, we will there give deterministic results. 

Our main result is the following: 
\begin{theorem}[{\thmref{thm:tr}, stated informally}]\label{thm:tr''}
    If $n>k$ then it is impossible to synthesize a $n$-qubit pure state on a discarding \DQC$k$ machine. 
\end{theorem}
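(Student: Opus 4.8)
The plan is to argue at the level of the state's spectrum, using that a unitary cannot change the rank of a density matrix while a discard (partial trace) that leaves a \emph{pure} marginal is extraordinarily rigid. The single hard quantitative fact I need is the rank of the input; everything after that is structural.

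First I would record the spectrum of the input. The input density matrix is $\rho_{\mathrm{in}}=\ket0\bra0_\fA^{\otimes k}\otimes\frac1{2^n}\id_\fB$; since $\ket0\bra0^{\otimes k}$ is a rank-one projector and $\frac1{2^n}\id_\fB$ is full rank on $n$ qubits, $\rho_{\mathrm{in}}$ has rank exactly $2^n$, with every nonzero eigenvalue equal to $2^{-n}$. Writing $\sigma:=U\rho_{\mathrm{in}}U^\dagger$, unitary conjugation preserves the spectrum, so $\rank\sigma=2^n$ as well.

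Next comes the crux. I claim that if the discard leaves $\fD$ in a pure state then $\sigma$ must factor across the $\fC\!:\!\fD$ cut. Setting $\rho_\fD:=\tr_\fC\sigma=\ket\psi\bra\psi$, I would first absorb a local unitary on $\fD$ into $U$ (this changes neither purity of the marginal nor $\rank\sigma$) to assume $\ket\psi=\ket0_\fD$. Then working entrywise in the product basis $\ket c_\fC\otimes\ket d_\fD$, the identity $1=\bra0\rho_\fD\ket0=\sum_c\sigma_{(c,0),(c,0)}$ together with $\tr\sigma=1$ forces $\sigma_{(c,d),(c,d)}=0$ for every $c$ and every $d\neq0$; positive semidefiniteness then annihilates the entire corresponding rows and columns, yielding $\sigma=\rho_\fC\otimes\ket0\bra0_\fD$ for some density matrix $\rho_\fC$ on $\fC$. (This is the standard fact that a pure marginal forces a product state, which one could alternatively extract from a Schmidt-decomposition or extreme-point argument; I prefer the entrywise version as it matches the paper's stated methodology.) Combining the two steps, a product state has $\rank\sigma=\rank(\rho_\fC)\cdot1\le\dim\cH_\fC=2^k$, whereas Step 1 gave $\rank\sigma=2^n$. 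Hence $2^n\le2^k$, i.e.\ $n\le k$, contradicting $n>k$, so no choice of $U$ can render $\fD$ pure.

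The main obstacle I anticipate is the marginal-purity lemma of the middle step: making the entrywise vanishing argument fully rigorous and confirming that the ``WLOG $\ket\psi=\ket0$'' reduction is legitimate (it is, precisely because replacing $U$ by $(I_\fC\otimes V_\fD)U$ leaves both purity and $\rank\sigma$ invariant). Once that lemma is secured, the rank bookkeeping and the final contradiction are routine. I would also note that the argument only excludes the regime $n>k$, and that the matching boundary $n\le k$ is not claimed impossible; extending the bound to the iterated machine would require separately tracking how the spectrum (not merely the rank) evolves as fresh ancillae are appended and discarded across rounds, but the single-shot statement already isolates the essential obstruction.
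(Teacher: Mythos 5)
Your proof is correct, but it takes a genuinely different route from the paper's. The paper works entrywise on the \emph{unitary}: it computes the $(0,0)$ entry of $\tr_\fC U[\rho]$ directly, finds it equals $\frac{1}{B}\sum_{j}\sum_{x}\abs{u_{jD,x}}^2$, and bounds this by $C/B=2^{a-d}$ because each inner sum is a partial column norm of $U$; demanding the entry equal $1$ then forces $d\le a$. You instead work on the \emph{state}: rank $2^b$ of the input is preserved under conjugation, and the standard ``pure marginal forces a product state'' lemma (which you correctly prove via vanishing diagonal entries of a PSD matrix killing their rows and columns, and whose WLOG reduction to $\ket{0}_\fD$ matches the paper's Remark~\ref{rmk:any state}) forces $U\rho U^\dag=\rho_\fC\otimes\ket{0}\bra{0}_\fD$, whence $2^b=\rank(U\rho U^\dag)\le 2^c$ and again $d\le a$. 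Your argument is arguably cleaner and is close in spirit to the entropic proof the paper records in its conclusion (rank playing the role of a max-entropy, and the product-state factorization playing the role of Araki--Lieb). What the paper's cruder entrywise computation buys, however, is the \emph{quantitative} inequality $\lpr{\cE_{U,\fC}(\rho)}_{0,0}\le 2^{a-d}$ of \eqref{eq:punchline}, which is reused verbatim to bound the success probability of the measuring machine (\corref{cor:meas}) and, since matrix entries are continuous in the state, to obtain the noise-robust version (\propref{prop:error}); rank is discontinuous, so your argument establishes the exact no-go statement but would not directly extend to those corollaries without further work.
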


(Of course, if $n\le k$ then such a synthesis is immediate by simply using $\SWAP$ gates to ensure that the surplus of given pure qubits are moved to the output register. We actually prove a slightly stronger result, where we are not necessarily constrained to have $a=c$ and $b=d$.) 

This fact has many interesting consequences, including a probabilistic result for measuring \DQC$k$ machines: 
\begin{corollary}[{\corref{cor:meas}, stated informally}]\label{cor:meas''}
    If $n>k$ then it is impossible to synthesize a $n$-qubit pure state on a discarding \DQC$k$ machine with probability greater than $2^{k-n}$. 
\end{corollary}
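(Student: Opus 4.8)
The plan is to reduce the measuring machine to a discarding one and then feed in a \emph{quantitative} sharpening of \thmref{thm:tr''}. Suppose a measuring \DQC$k$ machine with circuit $U$ leaves $\fD$ in the target pure state $\ket{\psi}\bra{\psi}$ with total probability $P$, where $P$ is summed over whichever measurement outcomes of $\fC$ are declared successful. The first observation is that discarding $\fC$ instead of measuring it is exactly the same as measuring and then forgetting the outcome, so the discarding machine built from the very same $U$ outputs on $\fD$ the outcome-weighted average of the post-measurement states. Splitting that average into the successful part and the rest gives the convex decomposition
\begin{equation}
    \rho_\fD \;=\; \tr_\fC\!\lbr{U\lpr{\ket{0}\bra{0}^{\otimes k}\otimes\tfrac{1}{2^n}\id}U^\dagger} \;=\; P\,\ket{\psi}\bra{\psi} \;+\; (1-P)\,\sigma
\end{equation}
for some density matrix $\sigma$ on $\fD$. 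This identity is the bridge between the two models.

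The key quantitative input is that every eigenvalue of the discarded state $\rho_\fD$ is at most $2^{k-n}$. To establish this I would note that the input $\ket{0}\bra{0}^{\otimes k}\otimes\tfrac{1}{2^n}\id$ is exactly $2^{-n}$ times a rank-$2^n$ projection, so after conjugating by the unitary $U$ we still have $U(\cdots)U^\dagger = 2^{-n}\Pi$ with $\Pi$ a rank-$2^n$ projector on $\fC\fD$. Then for any unit vector $\ket{\phi}$ on $\fD$,
\begin{equation}
    \bra{\phi}\rho_\fD\ket{\phi} \;=\; 2^{-n}\sum_{x\in\{0,1\}^k}\bra{x}_\fC\bra{\phi}_\fD\,\Pi\,\ket{x}_\fC\ket{\phi}_\fD \;\le\; 2^{-n}\cdot 2^k \;=\; 2^{k-n},
\end{equation}
because $0\preceq\Pi\preceq\id$ forces each of the $2^k$ summands into $[0,1]$. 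Note that this is precisely the quantitative heart of \thmref{thm:tr''}: once $n>k$ the bound $2^{k-n}<1$ already rules out $\rho_\fD$ being pure.

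Combining the two facts finishes the argument: taking $\ket{\phi}=\ket{\psi}$ in the eigenvalue bound and comparing with the convex decomposition yields $P \le \bra{\psi}\rho_\fD\ket{\psi} \le 2^{k-n}$, which is the claim. The step I expect to demand the most care is making the bridge watertight rather than the final arithmetic: one must check that postselecting onto \emph{any} set of outcomes that all certify $\ket{\psi}$ really does leave the discarded state in the stated form, and one must carry the eigenvalue computation through the ``not necessarily $a=c$, $b=d$'' generalization flagged after \thmref{thm:tr''}, where the relevant bound becomes $2^{c-b}$ and the rank of $\Pi$ and the dimension of $\fC$ must be tracked independently. Everything past the bridge is the two short lines above.
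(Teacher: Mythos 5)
Your argument is correct and is essentially the paper's own proof of \corref{cor:meas}: your ``bridge'' is exactly the paper's decomposition $\tr_\fC U[\rho]=\sum_j p_j(\rho)\,S_j(\rho)$ into post-measurement states, and your step $P\le\bra{\psi}\rho_\fD\ket{\psi}$ is its chain $\sum_{j\in\cO}p_j=\sum_{j\in\cO}S_j(\rho)_{0,0}\,p_j\le\cE_{U,\fC}(\rho)_{0,0}$. The only difference is cosmetic: where the paper bounds the $(0,0)$ entry by an explicit entrywise computation with partial column norms of $U$ (the inequality \eqref{eq:punchline}), you phrase the identical fact as $0\preceq U\rho U^\dagger=2^{-n}\Pi\preceq 2^{-n}\id$ with $\Pi$ a projector, which is basis-free, correctly generalizes to the $C/B=2^{a-d}$ bound in the asymmetric setting, and also subsumes the reduction of \rmkref{rmk:any state} to the target $\ket{0}\bra{0}_\fD$.
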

We also highlight \thmref{thm:tr''}'s applicability to current questions of algorithm design. Recent work such as \cite{CDL,MS} (part of a trend towards Monte Carlo or Lindbladian approaches attempting to simulate nature; see also e.g.\ \cite{TOV+,YA,CB,CBKG,Cub,KBC+,CL,CW,LW,DLL,RFA}) has studied the iterated problem with $a$ ancilla qubits in $\fA$ (often $a=1$) always initialized to a pure state and an aribitrary starting state in $\fB$. Such a circuit is given in \figref{fig:cdl}. 
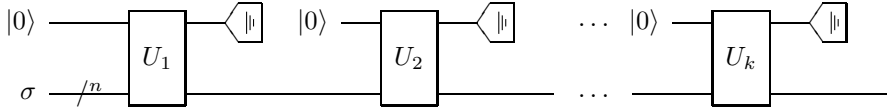
\begin{figure}
    \centering
    \leavevmode
    \Qcircuit @C=1.5em @R=1.5em {
    \lstick{\ket{0}} & \qw & \multigate{1}{U_1} \qw & \measuretab{\Ground{}} & & \lstick{\ket{0}} & \multigate{1}{U_2} \qw & \measuretab{\Ground{}} & & \dots & & \lstick{\ket{0}} & \multigate{1}{U_k} \qw & \measuretab{\Ground{}} \\
    \lstick{\gs} & {/^n} \qw & \ghost{U_1} & \qw & \qw & \qw & \ghost{U_2} & \qw & \qw & \dots & & \qw & \ghost{U_k} & \qw & \qw 
    }
    \captionsetup{width=.75\textwidth}
    \caption{A circuit of interest in algorithms such as \cite{CDL,MS}.}
    \label{fig:cdl}
\end{figure}
One intriguing result of \cite{CDL} is that it is always possible to pick each circuit to be local and arrange for convergence of the output state to the ground state of a given Hamiltonian, regardless of the initial state on $\fB$, even if that state is orthogonal to the target ground state. We are here able to establish lower bounds on that rate of convergence, in the instance that the second register begins maximally mixed, a sort of ``worst case'' input (intuitively speaking). 
\begin{corollary}[{\corref{cor:cdl}, stated informally}]\label{cor:cdl''}
    It is impossible to synthesize a $n$-qubit pure state on a \DQC1 machine with refresh in fewer than $n$ iterations. 
\end{corollary}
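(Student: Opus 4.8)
The plan is to reduce the iterated, refreshing \DQC1 circuit of \figref{fig:cdl} to a single (non-iterated) discarding \DQC$k$ machine, after which \thmref{thm:tr''} applies verbatim. Concretely, suppose the refresh circuit runs for $k$ iterations with unitaries $U_1,\dots,U_k$, each acting on one fresh $\ket{0}$ ancilla together with the $n$-qubit register $\fB$ (which begins maximally mixed, matching the \DQC1 input and the ``worst case'' hypothesis), and with the ancilla discarded at the end of its iteration. I claim the reduced state produced on $\fB$ after these $k$ iterations is exactly the output of the discarding \DQC$k$ machine that introduces all $k$ pure ancillae at once, applies the single unitary $U=U_k\cdots U_1$ (embedding each $U_i$ so that it touches only the $i$-th ancilla and $\fB$), and then discards all $k$ ancillae.

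To justify the equivalence I would use the fact that tracing out a subsystem commutes with any subsequent operation that does not act on it. In the refresh picture the $i$-th ancilla is discarded immediately after $U_i$ and is never touched again; in the combined picture it is merely carried along untouched and discarded only at the very end. Since no later $U_j$ with $j>i$ acts on ancilla $i$, deferring its discard to the end leaves the final reduced state on $\fB$ unchanged. Introducing all ancillae at the start rather than one at a time is harmless for the same reason: a fresh $\ket{0}$ that has not yet been acted upon may be adjoined at any earlier point. Equivalently, ``apply $U_i$ to (fresh ancilla, $\fB$) and then trace out the ancilla'' is a fixed quantum channel on $\fB$ regardless of how $\fB$ is correlated with the already-discarded ancillae, so an induction on $i$ shows the two circuits yield identical density matrices on $\fB$. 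The combined circuit is then precisely a discarding \DQC$k$ machine in the sense of \figref{fig:2}, with $a=c=k$ clean qubits and $b=d=n$ maximally mixed qubits, so the stronger unconstrained form of \thmref{thm:tr''} is not even needed.

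With the reduction in hand, the corollary follows by contraposition. If the refresh circuit uses fewer than $n$ iterations then $k<n$, i.e.\ $n>k$; by \thmref{thm:tr''} no discarding \DQC$k$ machine can synthesize an $n$-qubit pure state on $\fB$, and hence neither can the $k$-iteration refresh circuit. Therefore at least $n$ iterations are required, proving \corref{cor:cdl''}.

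I expect the only real subtlety — the main obstacle — to be stating the commuting-discards and deferred-introduction argument cleanly enough that the reader is convinced the iterated and combined circuits genuinely produce the same state on $\fB$. Once that bookkeeping is pinned down, the invocation of \thmref{thm:tr''} is immediate; the remaining care is simply to confirm the input convention that $\fB$ starts maximally mixed, under which the theorem was established.
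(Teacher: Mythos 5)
Your proposal is correct and follows essentially the same route as the paper: the paper's proof simply ``unfolds'' the iterated circuit into a single discarding \DQC$k$ machine with all $k$ ancillae introduced up front and discarded at the end (its Figure \ref{fig:moved}), then invokes \thmref{thm:tr}. The commuting-discards/deferred-introduction bookkeeping you flag as the main subtlety is exactly the content the paper compresses into ``up to relabeling circuit components this is exactly'' the single-shot machine, so your write-up is if anything more explicit on the one point that needs justification.
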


The tightness of this lower bound is rather straightforward and is independent of the input state on $\fB$. It is depicted in \figref{fig:cdl''}. 
\begin{figure}
    \centering
    \leavevmode
    \Qcircuit @C=1.5em @R=1.5em {
    \lstick{\ket{0}} & \qw & \qswap \qwx[1] \qw & \measuretab{\Ground{}} & & \lstick{\ket{0}} & \qswap \qwx[1] \qw & \measuretab{\Ground{}} & & \cdots & & \lstick{\ket{0}} & \qswap \qwx[1] \qw & \measuretab{\Ground{}} \\
    \lstick{\frac{1}{2^n}\id} & {/^n} \qw & \gate{\times_1} & \qw & \qw & \qw & \gate{\times_2} & \qw & \qw & \cdots & & \qw & \gate{\times_n} & \qw & \qw 
    }
    \captionsetup{width=.75\textwidth}
    \caption{A circuit witnessing the lower bound in \corref{cor:cdl''}. (The gate $\times_i$ performs a swap on only the $i$th wire in the register.)}
    \label{fig:cdl''}
\end{figure}
We remark that there is a good deal known already about the hardness of ground state preparation \cite{AN,KSV,KKR,AGIK} and indeed that it is \QMA-hard. There is also work with different models models \cite{GR,Aar,INN+}, culminating in the recent paper \cite{Ros} which gives a one-shot algorithm but measures algorithmic cost differently. 

We may then consider the problem of {\em approximate state synthesis}: what if the input state to the \DQC$k$ machine (measuring or discarding) is only promised to be accurate to precision $\eps$, in some suitable sense, and the target need only be returned to precision $\eps'$? (We may let either $\eps$ or $\eps'$ be 0. See \secref{sec:noise} for the precise definition of accuracy used.) For now, let such machines tolerating $\eps$ imprecision on the input be {\em measuring/discarding $\eps$-close \DQC$k$ machines}. We show analogues to \thmref{thm:tr''} and \corref{cor:meas''} for this robust setting, which also allows \corref{cor:cdl''} to be relaxed to a $\frac{1}{4}$-close state to the pure target state. 
\begin{proposition}[{\propref{prop:error}, stated informally}]\label{prop:error''}
    If $n>k$, then: 
    \begin{itemize}
        \item if $\eps+\eps'\le\frac{1}{4}$ it is impossible to synthesize a state $\eps'$-close to a pure state on a discarding $\eps$-close \DQC$k$ machine. 
        \item it is impossible to synthesize a state $\eps'$-close to a pute state on a measuring $\eps$-close \DQC$k$ machine with probability greater than 
        $$\frac{2^{k-n}+2\eps}{1-2\eps'}.$$
    \end{itemize}
\end{proposition}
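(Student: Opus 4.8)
The plan is to reduce both bullets of \propref{prop:error''} to a single operator inequality that already drives the exact results, and then to propagate the input imprecision $\eps$ and the output imprecision $\eps'$ through it using only the triangle inequality and the duality bound $\abs{\tr(M\gD)}\le\norm{M}_\infty\norm{\gD}_1$. Write $\rho_{\mathrm{id}}=\ket{0}\bra{0}_\fA\otimes\frac{1}{2^n}\id_\fB$ for the ideal \DQC$k$ input and let $\gs_{\mathrm{id}}=\tr_\fC(U\rho_{\mathrm{id}}U^\dagger)$ be the ideal discarding output on $\fD$. The key fact, which is exactly what the proofs of \thmref{thm:tr''} and \corref{cor:meas''} establish, is the bound $\gs_{\mathrm{id}}\preceq 2^{k-n}\id_\fD$: the input has largest eigenvalue $2^{-n}$, this is preserved by the unitary $U$, and tracing out the $k$ qubits of $\fC$ can inflate the top eigenvalue by at most a factor $2^{k}$. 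In particular $\bra{\psi}\gs_{\mathrm{id}}\ket{\psi}\le 2^{k-n}$ for every pure $\ket{\psi}$ on $\fD$, and I will take this as my starting point.

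First I would set up the error propagation. Denote by $\rho$ the actual (imprecise) input, so that $\rho$ is within trace distance $\eps$ of $\rho_{\mathrm{id}}$, and write $\gs=\tr_\fC(U\rho U^\dagger)$ for the realized discarding output. Since the trace norm is unitarily invariant and the partial trace is trace-norm contractive, $\gs$ stays within trace distance $\eps$ of $\gs_{\mathrm{id}}$, i.e.\ $\norm{\gs-\gs_{\mathrm{id}}}_1\le 2\eps$. Combining this with the duality bound (taking $M=\ket{\psi}\bra{\psi}$, so $\norm{M}_\infty=1$) gives the one estimate that powers everything: for any pure $\ket{\psi}$,
$$\bra{\psi}\gs\ket{\psi}\le\bra{\psi}\gs_{\mathrm{id}}\ket{\psi}+\abs{\bra{\psi}(\gs-\gs_{\mathrm{id}})\ket{\psi}}\le 2^{k-n}+\norm{\gs-\gs_{\mathrm{id}}}_1\le 2^{k-n}+2\eps.$$
I will also record the complementary, elementary fact that a state $\gs'$ lying within trace distance $\eps'$ of a pure state $\ket{\psi}\bra{\psi}$ must have large overlap, $\bra{\psi}\gs'\ket{\psi}\ge 1-2\eps'$, again by the duality bound with $M=\ket{\psi}\bra{\psi}$.

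For the discarding bullet I would argue by contradiction: if the realized output $\gs$ were $\eps'$-close to some pure $\ket{\psi}\bra{\psi}$, the two estimates would force $1-2\eps'\le\bra{\psi}\gs\ket{\psi}\le 2^{k-n}+2\eps\le\frac12+2\eps$, using $n>k$ in the last step, whence $\eps+\eps'\ge\frac14$; contraposing yields the claim. For the measuring bullet I would re-run the decomposition behind \corref{cor:meas''}: writing the outcome probabilities $q_x$ and post-measurement states $\gs^{(x)}$ on $\fD$ of a measurement of $\fC$, one has $\sum_x q_x\gs^{(x)}=\gs$. Summing the overlap lower bound over the \emph{good} outcomes (those whose $\gs^{(x)}$ is $\eps'$-close to the target $\ket{\psi}$) and bounding their total against the full, nonnegative sum gives
$$(1-2\eps')\,P_{\mathrm{succ}}\le\sum_{x}\bra{\psi}\big(q_x\gs^{(x)}\big)\ket{\psi}=\bra{\psi}\gs\ket{\psi}\le 2^{k-n}+2\eps,$$
which rearranges to the stated $P_{\mathrm{succ}}\le(2^{k-n}+2\eps)/(1-2\eps')$.

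I expect the main obstacle to be bookkeeping rather than conceptual: pinning down the factors of two (which hinge on whether closeness is $\frac12\norm{\cdot}_1$ or $\norm{\cdot}_1$, as fixed in \secref{sec:noise}) and justifying that the outcome-averaged post-measurement state on $\fD$ is exactly the discarded state $\gs$ for the measurement actually performed. A secondary delicate point is the boundary case $n=k+1$, where $2^{k-n}=\frac12$ makes the discarding inequality $\eps+\eps'\ge\frac14$ tight; recovering impossibility at the equality $\eps+\eps'=\frac14$ (rather than only for $\eps+\eps'<\frac14$) requires checking that the two overlap estimates cannot be simultaneously saturated, or else reading the claim in its intended informal sense.
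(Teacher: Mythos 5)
Your proposal is correct, and its skeleton is the paper's: bound the target-state overlap of the ideal output by $2^{k-n}$, add $2\eps$ for the input perturbation, compare against the overlap at least $1-2\eps'$ forced by output closeness, and in the measuring case restrict the decomposition $\sum_x q_x\gs^{(x)}$ to the good outcomes before dividing by $1-2\eps'$. Where you genuinely differ is in how the two supporting estimates are obtained, and in both cases your route is the more standard and slightly more general one. For the $2^{k-n}$ bound the paper computes the $(0,0)$ entry of $\tr_\fC U[\rho]$ explicitly as a sum of $2^c$ partial column norms of $U$ (see \eqref{eq:punchline}), and then needs \rmkref{rmk:any state} to move to an arbitrary pure target; you instead note the operator inequality $\tr_\fC U[\rho]\preceq 2^{c}\norm{\rho}_\infty\id_\fD$, which handles every pure target at once and is the sharper statement. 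For the $2\eps$ term the paper proves a bespoke entrywise lemma (\lemref{lem:00 error}) by diagonalizing $\gd=\tilde\rho-\rho$ and bounding the resulting sum by $\norm{\bgl}_1$; you use unitary invariance plus trace-norm contractivity of the partial trace together with $\abs{\tr(M\gd)}\le\norm{M}_\infty\norm{\gd}_1$, which is the clean data-processing argument and avoids the paper's somewhat ad hoc computation. Your two flagged worries are non-issues: the identification of the outcome-averaged post-measurement state with the discarded state is exactly \eqref{eq:trace out} for a computational-basis measurement of $\fC$, and the boundary case $\eps+\eps'=\frac14$ at $d=a+1$ is an imprecision already present in the paper (the formal hypothesis is stated with $\le$ while the proof of \propref{prop:error}(a) only delivers the conclusion under strict inequality, leaning on the strict bound in \lemref{lem:00 error}), so reading the claim with strict inequality, as you suggest, is what the paper itself effectively does.
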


Finally, we study the special case of {\em Gibbs state synthesis}: what if the target state to the \DQC$k$ machine is a Gibbs state of a given Hamiltonian at given temperature? We further show analogues to \thmref{thm:tr''} and \corref{cor:meas''} for the Gibbs state setting: 
\begin{corollary}[{\corref{cor:gibbs}, stated informally}]\label{cor:gibbs''}
    If $n>k$ and a Hamiltonian on the output register has spectral gap $\g$ and the inverse temperature is at least the order of $\frac{n}{\g}$, then it is impossible to synthesize the Hamiltonian's Gibbs state at the given temperature:
    \begin{itemize}
        \item on a discarding \DQC$k$ machine. 
        \item on a measuring \DQC$k$ machine, with probability greater than $2^{1+k-n}$. 
    \end{itemize}
\end{corollary}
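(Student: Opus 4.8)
The plan is to reduce Gibbs-state synthesis to the approximate pure-state synthesis already ruled out by \propref{prop:error''}, exploiting the fact that a sufficiently low-temperature Gibbs state is operationally indistinguishable from its ground state. Writing $H=\sum_i E_i\ket i\bra i$ on the $n$-qubit output register with $E_0<E_1\le\cdots$ and spectral gap $\g=E_1-E_0$, the target Gibbs state is $\tau_\gb=\frac1Z\sum_i e^{-\gb E_i}\ket i\bra i$. Since $\tau_\gb$ and the ground-state projector $\ket0\bra0$ are simultaneously diagonal in the energy eigenbasis, I expect their distance to be controlled entirely by the Boltzmann weight on the excited subspace, and the whole corollary to then fall out by quoting \propref{prop:error''} with exact input.

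First I would establish the core estimate. The ground-state population is $p_0=\big(\sum_i e^{-\gb(E_i-E_0)}\big)^{-1}$, so the trace distance to the pure ground state is
$$\half\norm{\tau_\gb-\ket0\bra0}_1=1-p_0=\frac{\sum_{i\ge1}e^{-\gb(E_i-E_0)}}{1+\sum_{i\ge1}e^{-\gb(E_i-E_0)}}\le\sum_{i\ge1}e^{-\gb(E_i-E_0)}\le 2^ne^{-\gb\g},$$
using that each of the at most $2^n-1$ excited levels lies at least $\g$ above the ground energy. Writing $\eps'=2^ne^{-\gb\g}$, the choice $\gb\ge\frac{(n+2)\ln2}{\g}=\Theta(n/\g)$ forces $\eps'\le\frac14$; this is exactly the origin of the hypothesis that the inverse temperature be of order $n/\g$.

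With this in hand, both bullets should follow immediately from \propref{prop:error''} applied with exact input $\eps=0$ and output tolerance $\eps'$, since a machine synthesizing $\tau_\gb$ would in particular produce a state $\eps'$-close to the pure state $\ket0\bra0$. For the discarding case, $\eps+\eps'=\eps'\le\frac14$ triggers the impossibility clause directly. For the measuring case, the probability is at most $\frac{2^{k-n}+2\eps}{1-2\eps'}=\frac{2^{k-n}}{1-2\eps'}\le\frac{2^{k-n}}{1/2}=2^{1+k-n}$, matching the claimed bound.

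The one step that demands genuine care is the trace-distance estimate in the second paragraph, both in pinning down the right scaling—the factor $2^n$ inside the exponent is precisely what raises the needed inverse temperature from $\Theta(1/\g)$ to $\Theta(n/\g)$—and in the implicit requirement that the ground state be nondegenerate, which a nonzero spectral gap $\g=E_1-E_0$ supplies. I would also confirm that \secref{sec:noise}'s notion of closeness agrees with trace distance up to constants; if it is instead phrased via fidelity, I would bound $1-\bra0\tau_\gb\ket0=1-p_0$ by the same quantity, leaving the estimate and the final thresholds unchanged up to the constant absorbed into the $\Theta(n/\g)$ hypothesis.
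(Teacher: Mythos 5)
Your proposal is correct and follows essentially the same route as the paper: the paper's Lemma 5.2 likewise computes the trace distance $\d\lpr{\tG_\gb(H),\ket{\psi_0}\bra{\psi_0}}=1-\tfrac{1}{\tZ_\gb}$ exactly as your $1-p_0$, bounds the partition function by $1+(2^d-1)e^{-\gb\g}$ to get the $\gb=\Theta(n/\g)$ threshold, and then invokes \propref{prop:error} with $(\eps,\eps')=\lpr{0,\tfrac14}$ for both bullets. Your constant $(n+2)\ln 2$ versus the paper's $d\ln 2+\ln 3$ is an immaterial difference.
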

We also point out that for clarity's sake we have focused on the case of the target state being rank-one, however one could readily adapt the methods herein to more general low-rank mixtures. 

Before delving into the proofs, we lastly remark that the results here are relatively elementary---they boil down to manipulating matrices on the level of entries---but having not seen such results in the literature before, we present them in case they are of interest or of use. 

The structure of the paper is as follows. In \secref{sec:tr} we prove \thmref{thm:tr''} about discarding \DQC$k$ machines, in \secref{sec:meas} we prove \corref{cor:meas''} about measuring \DQC$k$ machines, in \secref{sec:noise} we prove \propref{prop:error''} about noise on the input and/or output, in \secref{sec:gibbs} we prove \corref{cor:gibbs''} about prepration of Gibbs states at low temperatures, and in \secref{sec:cdl} we prove \corref{cor:cdl''} about an application to current algorithms. In \secref{sec:future} we conclude and discuss possible future avenues of research.

\section{No-go result for tracing out \texorpdfstring{(\thmref{thm:tr''})}{}}\label{sec:tr}

\begin{theorem}[{\thmref{thm:tr''}, restated formally}]\label{thm:tr}
    Given input state $\rho=\ket{0}\bra{0}_\fA\otimes\frac{1}{2^b}\id_\fB$ on the registers $\fA\sqcup\fB$ and a subset $\fC$ of the qubits with complement $\fD$ having $d>a$ qubits, there does not exist a circuit $U$ such that tracing out $\fC$ leaves $U\rho U^\dag$ in a pure state on $\fD$. (See \figref{fig:1}.)
\end{theorem}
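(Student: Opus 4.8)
The plan is to argue by contradiction purely at the level of ranks, reducing the whole statement to one structural fact: a bipartite state whose reduction onto one factor is pure must itself factor as a product across that cut. Write $\tau = U\rho U^\dag$ for the state on $\fC\sqcup\fD$ produced by the circuit, and suppose toward a contradiction that $\tr_\fC\tau = \ket\psi\bra\psi_\fD$ is pure.

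First I would pin down the rank of the data. Since $\rho=\ket0\bra0_\fA\otimes\frac1{2^b}\id_\fB$ is the tensor of a rank-one matrix with a full-rank matrix on $b$ qubits, $\rank\rho = 2^b$, and because conjugation by the unitary $U$ preserves rank, $\rank\tau = 2^b$ as well. Separately, the two ways of slicing the machine's qubits, as $\fA\sqcup\fB$ and as $\fC\sqcup\fD$, describe the same underlying register, so $a+b=c+d$; this is the only place the qubit bookkeeping enters.

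The crux is the factorization lemma: if $\tr_\fC\tau$ is pure then $\tau=\tau_\fC\otimes\ket\psi\bra\psi_\fD$ with $\tau_\fC=\tr_\fD\tau$. I would prove it from a spectral decomposition $\tau=\sum_i p_i\ket{\Phi_i}\bra{\Phi_i}$ with all $p_i>0$. Each $\sigma_i:=\tr_\fC\ket{\Phi_i}\bra{\Phi_i}$ is positive semidefinite of unit trace, and $\sum_i p_i\sigma_i=\ket\psi\bra\psi$ is rank one; testing against any $\ket\phi\perp\ket\psi$ gives $\sum_i p_i\bra\phi\sigma_i\ket\phi=0$ with every summand nonnegative, which forces $\bra\phi\sigma_i\ket\phi=0$ and hence $\sigma_i=\ket\psi\bra\psi$ for each $i$. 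Purity of $\sigma_i$ means $\ket{\Phi_i}$ has Schmidt rank one across $\fC:\fD$, so $\ket{\Phi_i}=\ket{\chi_i}_\fC\otimes\ket\psi_\fD$; summing recovers the product form. This lemma is the one step I expect to carry the real weight, since it is precisely where the ``pure output'' hypothesis is consumed; everything after it is counting.

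With the lemma in hand the conclusion is immediate. From the product form, $\rank\tau=\rank\tau_\fC\le 2^c$, because $\tau_\fC$ is a state on the $c$-qubit register $\fC$. Combining with $\rank\tau=2^b$ yields $2^b\le 2^c$, i.e.\ $b\le c$, and substituting $c=a+b-d$ rearranges this to $d\le a$, contradicting the hypothesis $d>a$. Hence no such $U$ exists, and the bound is insensitive to whether $a=c$ and $b=d$, giving the promised slightly stronger form.
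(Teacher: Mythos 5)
Your proposal is correct, and it takes a genuinely different route from the paper. The paper works directly with matrix entries: it expands $\tr_\fC U[\rho]$ as $\sum_j(\bra{j}_\fC\otimes\id_\fD)U\rho U^\dag(\ket{j}_\fC\otimes\id_\fD)$, computes the $(0,0)$ entry of the output explicitly in terms of the entries of $U$, and bounds it by $2^{c}/2^{b}$ using the fact that partial column norms of a unitary are at most $1$; purity forces this entry to equal $1$, giving $d\le a$. Your argument instead isolates the structural lemma that a bipartite state with a pure marginal must factorize as $\tau_\fC\otimes\ket\psi\bra\psi_\fD$ (proved correctly via the spectral decomposition and Schmidt rank), and then compares ranks: $2^b=\rank\rho=\rank\tau\le 2^c$. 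This is cleaner and closer in spirit to the entropic proof via the Araki--Lieb inequality that the paper records in its conclusion (your $\log\rank$ playing the role of the max-entropy where that proof uses von Neumann entropy). What the rank argument does not buy you is the quantitative inequality $\lpr{\cE_{U,\fC}(\rho)}_{0,0}\le 2^{a-d}$, which the paper's entrywise computation yields as a byproduct and then reuses verbatim for the measurement-probability bound of \corref{cor:meas}, the noise-robust \propref{prop:error}, and the Gibbs-state \corref{cor:gibbs}; your factorization lemma consumes exact purity and degrades less gracefully when the marginal is only $\eps'$-close to pure. So for the theorem as stated your proof is complete and arguably more conceptual, but the paper's approach is the one that generalizes to the rest of its results.
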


The quantum channels of interest are of the form $\tr_\fC U[\rho]$, where $\rho=\ket{0}\bra{0}_\fA\otimes\frac{1}{2^b}\id_\fB$ is the input state; $\tr_\fC$ denotes the partial trace from discarding the qubits in register $\fC$; $U[\gs]=U\gs U^\dag$ is conjugation; and $U$ is the circuit to be found. Thus we say that 
$$\cE_{U,\fC}(\gs)=\tr_\fC U[\gs]$$
is the relevant quantum channel. We claim that $\cE_{U,\fC}(\rho)\neq\ket{0}\bra{0}_\fD$ where $\fD$ is $\fC$'s complement. 

\begin{remark}\label{rmk:any state}
    It is important to note that $\ket{0}\bra{0}_\fD$ could be any other pure state: suppose instead we had as our target $\ket{\psi}\bra{\psi}_\fD$, and there were a unitary $U$ such that $\cE_{U,\fC}(\rho)=\ket{\psi}\bra{\psi}_\fD$. Let $V_\fD$ be any circuit sending $\ket{\psi}_\fD$ to $\ket{0}_\fD$. Then $\cE_{(\id_\fC\otimes V_\fD)U,\fC}(\rho)=\ket{0}\bra{0}_\fD$. Thus the problems for target state $\ket{0}_\fD$ and $\ket{\psi}_\fD$ are equivalent, and we lose nothing by considering $\ket{0}\bra{0}_\fD$ without loss of generality. 
\end{remark}

We introduce some further notation and shorthand. Continue to say that $a=\#\fA$, $b=\#\fB$, $c=\#\fC$, and $d=\#\fD$, satisfying $n=a+b=c+d$, and write $A=2^a$, $B=2^b$, $C=2^c$, $D=2^d$, and $N=2^n$. 

Without loss of generality we may take $\fA$ to be the top $a$ wires and $\fD$ to be the bottom $d$ wires, after applying all relevant $\SWAP$ gates. Say also that $\db{N}=\{0,\dots,N-1\}$; by default, index density matrices on the full register with $\db{N}\times\db{N}$, in correspondence with the computational basis.

\begin{proof}[{Proof of \thmref{thm:tr}}]
Let $\gs$ be any density matrix on the full set of qubits $\fA\sqcup\fB=\fC\sqcup\fD$. Then,
\begin{equation}
    \tr_\fD\gs=\sum\limits_{j=0}^{C-1}(\bra{j}_\fC\otimes\id_\fD)\gs(\ket{j}_\fC\otimes\id_\fD). \label{eq:trace out}
\end{equation}
We can write $\ket{j}_\fC\otimes\id_\fD$ as the column vector of square matrices which is 0 except in the $j$th block where it is $\id_\fD$, i.e.\ $\begin{pmatrix}
    0_\fD & \dots & 0_\fD & \id_\fD & 0_\fD & \dots & 0_\fD
\end{pmatrix}^\dag$, so overall in the left column the 1 appears in position $jD$. 

Take some candidate $U$ and label its columns $\bu_0,\dots,\bu_{N-1}$. Then, compute $U(\ket{j}_\fC\otimes\id_\fD)$ to be the columns $\bu_{jD},\dots,\bu_{(j+1)D-1}$ of $U$. 

The input state $\rho$ is $\frac{1}{B}$ times the square $N\times N$ block-diagonal matrix where the top-left block is a $B$-dimensional identity matrix and all other entries are 0. Notice that the block-diagonal matrix is real, Hermitian, and its own square. Call it $P$. Then, we wish to compute $P_j=PU(\ket{j}_\fC\otimes\id_\fD)$ and then $\frac{1}{B}P_j^\dag P_j$. We compute that $P_j$ is the $N\times D$ matrix where for $x\in\db{B}$ and $y\in\db{D}$, $(P_j)_{x,y}=u_{jD+y,x}$, and for rows $B$ and below $P_j$ is all 0. 

We then compute that
$$\lpr{\frac{1}{B}P_j^\dag P_j}_{0,0}=\frac{1}{B}\sum\limits_{x=0}^{B-1}\abs{u_{jD,x}}^2.$$
Thus 
\begin{equation}
    1=\lpr{\cE_{U,\fC}(\rho)}_{0,0}=\frac{1}{B}\sum\limits_{j=0}^{C-1}\sum\limits_{x=0}^{B-1}\abs{u_{jD,x}}^2\le\frac{C}{B}\label{eq:punchline}
\end{equation}
(the first equality following by the definition of any state $\ket{0}\bra{0}$, the second by \eqref{eq:trace out}, and the inequality since each of the $C$ sums is a partial column norm of a unitary matrix), so that $d\le a$, a contradiction. Consequently, there does not exist a circuit $U$ such that tracing out $\fC$ leaves $U\rho U^\dag$ in a pure state on $\fD$. 
\end{proof}

\section{A bound on the probability when measuring \texorpdfstring{(\corref{cor:meas''})}{}}\label{sec:meas}
 
The following gives a bound when measuring $\fC$, where the probability of successfully obtaining the target pure state is upper-bounded by a quantity decaying exponentially in the number of additional qubits to be synthesized. 

\begin{corollary}[{\corref{cor:meas''}, restated formally}]\label{cor:meas}
    Given input state $\rho=\ket{0}\bra{0}_\fA\otimes\frac{1}{2^b}\id_\fB$ on the registers $\fA\sqcup\fB$ and a subset $\fC$ of the qubits with complement $\fD$ having $d>a$ qubits, for all circuits $U$, the probability that measuring $\fC$ leaves $U\rho U^\dag$ in a pure state on $\fD$ is at most $2^{a-d}$. 
\end{corollary}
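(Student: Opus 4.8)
The plan is to reduce the measuring case to the entrywise estimate already established in the proof of \thmref{thm:tr}, using the elementary fact that discarding $\fC$ is the same as measuring $\fC$ in the computational basis and then forgetting the outcome. First, as in \rmkref{rmk:any state}, I would argue that we may take the target to be $\ket{0}\bra{0}_\fD$ without loss of generality: if $V_\fD$ is any circuit sending the desired pure state to $\ket{0}_\fD$, then replacing $U$ by $(\id_\fC\otimes V_\fD)U$ turns a success for the original target into a success for $\ket{0}\bra{0}_\fD$; since $V_\fD$ is supported on $\fD$ it commutes with the measurement of $\fC$ and leaves every outcome probability unchanged. Likewise, measuring $\fC$ in an arbitrary orthonormal basis amounts to first applying a unitary supported on $\fC$ and then measuring in the computational basis, and that unitary can be absorbed into $U$; so I may assume the measurement is in the computational basis.

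Writing $\tau=U\rho U^\dag$, outcome $j\in\db{C}$ occurs with probability $p_j=\tr Q_j$ and leaves the normalized residual state $\sigma_j=Q_j/p_j$ on $\fD$, where $Q_j:=(\bra{j}_\fC\otimes\id_\fD)\tau(\ket{j}_\fC\otimes\id_\fD)\succeq0$ is the unnormalized conditional state. The crucial observation is that summing these conditional states over all outcomes recovers the output of the discarding channel,
$$\sum_{j=0}^{C-1}Q_j=\tr_\fC\tau=\cE_{U,\fC}(\rho),$$
by the partial-trace formula \eqref{eq:trace out}. Letting $S=\{j:\sigma_j=\ket{0}\bra{0}_\fD\}$ index the successful outcomes, the success probability is $\sum_{j\in S}p_j$, and for each $j\in S$ we have $Q_j=p_j\ket{0}\bra{0}_\fD$, so that $\lpr{Q_j}_{0,0}=p_j$.

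It then remains to extract the $(0,0)$ entry. Since every $Q_j$ is positive semidefinite its $(0,0)$ entry is nonnegative, so dropping the unsuccessful outcomes only increases the sum:
$$\sum_{j\in S}p_j=\sum_{j\in S}\lpr{Q_j}_{0,0}\le\sum_{j=0}^{C-1}\lpr{Q_j}_{0,0}=\lpr{\cE_{U,\fC}(\rho)}_{0,0}.$$
But the proof of \thmref{thm:tr} computes $\lpr{\cE_{U,\fC}(\rho)}_{0,0}=\frac1B\sum_{j=0}^{C-1}\sum_{x=0}^{B-1}\abs{u_{jD,x}}^2\le\frac CB$, each inner sum being a partial column norm of the unitary $U$ and hence at most $1$. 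Finally, since $a+b=c+d$ forces $c-b=a-d$, we get $\frac CB=2^{c-b}=2^{a-d}$, so the success probability is at most $2^{a-d}$, as claimed.

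I do not expect a genuine obstacle, as the quantitative work is entirely inherited from \thmref{thm:tr}. The step needing the most care is the bookkeeping connecting measurement to discarding: verifying the identity $\sum_j Q_j=\cE_{U,\fC}(\rho)$, checking that a successful outcome contributes exactly its probability $p_j$ to the single entry $\lpr{\cdot}_{0,0}$, and justifying the two reductions above, so that the one fixed matrix entry indexed by $0$ really does account for the total success probability.
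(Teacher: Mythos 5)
Your proposal is correct and follows essentially the same route as the paper's proof: your $Q_j$, $p_j$, $\sigma_j$ are exactly the paper's $T_j$, $p_j$, $S_j$, and both arguments bound the success probability by the single entry $\lpr{\cE_{U,\fC}(\rho)}_{0,0}\le C/B$ established in \eqref{eq:punchline}. The extra remarks on absorbing a change of measurement basis into $U$ are fine but not needed for the statement as posed.
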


In particular, if $d=a+\gw(\poly\log a)$ then the probability is $o\lpr{\frac{1}{\poly a}}$, so that the probability of synthesizing logarithmically-many additional qubits is of relatively small order; when $a=\gT(\log n)$ (recall from \cite{Shepherd} that this does not change the class of decision problems solvable by the model), this probability is $o\lpr{\frac{1}{\poly\log n}}$.

Towards the proof, we take the setup from \thmref{thm:tr} with a slightly different quantum channel $\cE'_{U,\fC}(\rho)$, where the state $U[\rho]$ is measured on the qubits in $\fC$. Then, we study $\bP\lbr{\cE'_{U,\fC}(\rho)=\ket{0}\bra{0}_\fD}$. (We maintain the target state of $\ket{0}_\fD$ for the exact same reason as given in \rmkref{rmk:any state}.)

\begin{proof}
    We reuse some of the work from the preceding proof. In the setting of \eqref{eq:trace out}, call 
    \begin{align*}
        T_j(\gs)=(\bra{j}_\fC\otimes\id_\fD)\gs(\ket{j}_\fC\otimes\id_\fD), && p_j(\gs)=\tr T_j(\gs), && S_j(\gs)=\frac{T_j(\gs)}{p_j(\gs)}.
    \end{align*}
    Here, $p_j$ represents the probability of measuring $j$, and $S_j$ is the resulting state on $\fD$, so that $\tr_\fD\gs=\sum\limits_{j=0}^{C-1}S_j(\gs)p_j(\gs)$. $S_j$ is positive semidefinite, hence its $(0,0)$ entry is nonnegative, and of course $p_j\ge0$ because it is a probability. Consider again $\gs=\rho$. Let $\cO\subset\db{C}$ be the set of measurements on $\fC$ which leave the system in $\ket{0}\bra{0}_\fD$; if $j\in\cO$ then $S_j(\rho)_{0,0}=1$, and so $\bP\lbr{\cE'_{U,\fC}(\rho)=\ket{0}\bra{0}_\fD}=\sum\limits_{j\in\cO}p_j(\rho)=\sum\limits_{j\in\cO}S_j(\rho)_{0,0}p_j(\rho)$, thus
    \begin{align}
        \bP\lbr{\cE'_{U,\fC}(\rho)=\ket{0}\bra{0}_\fD}\le\sum\limits_{j=0}^{C-1}S_j(\rho)_{0,0}p_j(\rho)=\cE_{U,\fC}(\rho)_{0,0}\le\frac{C}{B}=2^{a-d}. \label{eq:E' to E}
    \end{align}
    The first equality recalls $\cE_{U,\fC}$ from \thmref{thm:tr}, the second inequality invokes \eqref{eq:punchline}, and the remaining steps are evident. 
\end{proof}
This bound can be saturated by letting $U$ be $\SWAP$ gates which ensure that the output register contains the $a$ pure input qubits. 

\section{Robustness under noise \texorpdfstring{(\propref{prop:error''})}{}}\label{sec:noise}
We now consider the introduction of noise. The most natural metric to use on the set of density matrices is the trace distance $\d(\gs,\gs')=\half\tr\abs{\gs-\gs'}=\half\norm{\gs-\gs'}_1$, though we will often use the weaker operator norm, via the (dimension-independent) inequality $\norm{\gs-\gs'}\le2\d(\gs,\gs')$. Accordingly, we say that $\gs$ is {\em $\eps$-close} to $\gs'$ if $\d(\gs,\gs')\le\eps$. 

Then, we show that \thmref{thm:tr} and \corref{cor:meas} are robust, in the sense that the former persists if $\eps+\eps'\le\frac{1}{4}$, and the latter where the probability is upper-bounded by $\frac{2^{a-d}+2\eps}{1-2\eps'}$. 

The former claim (\propref{prop:error}(a)) is especially strong: we may allow $\eps$ and $\eps'$ to both be constant-sized, say $\frac{1}{8}$. This amounts to a ``no-go zone'' in the space of input density matrices which is a constant-radius $\ell^1$ ball about $\rho$. The second is still of interest, but especially so when $\eps=o\lpr{\frac{1}{\poly\log n}}$ and $\eps'<\half$ so that the bound on the probability from the \secref{sec:meas} is unchanged up to prefactor. 

\begin{proposition}[{\propref{prop:error''}, restated formally}]\label{prop:error}
    Given input state $\tilde{\rho}$ which is $\eps$-close to $\ket{0}\bra{0}_\fA\otimes\frac{1}{2^b}\id_\fB$ on the registers $\fA\sqcup\fB$ and a subset $\fC$ of the qubits with complement $\fD$, as well as a second precision parameter $\eps'$:
    \begin{enumerate}
        \item if $\eps+\eps'\le\half-2^{a-d-1}$, there does not exist a circuit $U$ such that tracing out $\fC$ leaves $U\tilde{\rho}U^\dag$ in a state which is $\eps'$-close to a pure state on $\fD$. 
        \item for all circuits $U$, the probability that measuring $\fC$ leaves $U\tilde{\rho}U^\dag$ in a state which is $\eps'$-close to a pure state on $\fD$ is at most $\frac{2^{a-d}+2\eps}{1-2\eps'}$. 
    \end{enumerate}
\end{proposition}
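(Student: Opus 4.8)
The plan is to transfer the exact-input estimate \eqref{eq:punchline}, namely $\cE_{U,\fC}(\rho)_{0,0}\le C/B=2^{a-d}$, to the noisy input $\tilde\rho$ using the fact that $\cE_{U,\fC}=\tr_\fC\circ U[\cdot]$ is a quantum channel and is therefore contractive with respect to trace distance: $\d\lpr{\cE_{U,\fC}(\gs),\cE_{U,\fC}(\gs')}\le\d(\gs,\gs')$ for all states $\gs,\gs'$. As in \rmkref{rmk:any state}, post-composing $U$ with a unitary on $\fD$ lets me assume the target pure state is $\ket{0}\bra{0}_\fD$ at no cost, since unitary conjugation preserves trace distance; in the measuring case I fix this single target first and only then define the set of successful outcomes relative to it. Throughout I pass from trace distance to operator norm via $\norm{M}\le2\d$ and read off the corner of a Hermitian $M$ through $\abs{M_{0,0}}=\abs{\bra{0}M\ket{0}}\le\norm{M}$.

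For part (a), suppose for contradiction that some $U$ leaves $\cE_{U,\fC}(\tilde\rho)$ within trace distance $\eps'$ of $\ket{0}\bra{0}_\fD$. Contractivity gives $\d\lpr{\cE_{U,\fC}(\tilde\rho),\cE_{U,\fC}(\rho)}\le\d(\tilde\rho,\rho)\le\eps$, so the triangle inequality yields $\d\lpr{\cE_{U,\fC}(\rho),\ket{0}\bra{0}_\fD}\le\eps+\eps'$. Writing $M=\cE_{U,\fC}(\rho)-\ket{0}\bra{0}_\fD$ and using \eqref{eq:punchline}, the corner satisfies $\abs{M_{0,0}}=1-\cE_{U,\fC}(\rho)_{0,0}\ge1-2^{a-d}$, while $\abs{M_{0,0}}\le\norm{M}\le2\,\d\lpr{\cE_{U,\fC}(\rho),\ket{0}\bra{0}_\fD}\le2(\eps+\eps')$. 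Together these force $\half-2^{a-d-1}\le\eps+\eps'$, contradicting the hypothesis. The boundary case $\eps+\eps'=\half-2^{a-d-1}$ is excluded because $M$ is a nonzero traceless Hermitian matrix (the difference of two density matrices, with $M_{0,0}<0$); it therefore has both a positive and a negative eigenvalue, so $\norm{M}<\norm{M}_1=2\,\d\lpr{\cE_{U,\fC}(\rho),\ket{0}\bra{0}_\fD}$ strictly, and the chain above becomes strict.

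For part (b) I reuse the per-outcome objects $T_j,p_j,S_j$ from the proof of \corref{cor:meas}, now formed from $U[\tilde\rho]$, and let $\cO\subseteq\db{C}$ collect the outcomes $j$ for which $S_j$ is $\eps'$-close to $\ket{0}\bra{0}_\fD$. For such $j$ the corner bound gives $(S_j)_{0,0}\ge1-2\eps'$, and since every $(T_j)_{0,0}\ge0$ (each $T_j$ is positive semidefinite),
$$(1-2\eps')\sum\limits_{j\in\cO}p_j\le\sum\limits_{j\in\cO}p_j(S_j)_{0,0}\le\sum\limits_{j=0}^{C-1}(T_j)_{0,0}=\cE_{U,\fC}(\tilde\rho)_{0,0}.$$
The success probability is $\sum_{j\in\cO}p_j$, so it remains to bound the right-hand side. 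Contractivity again gives $\abs{\cE_{U,\fC}(\tilde\rho)_{0,0}-\cE_{U,\fC}(\rho)_{0,0}}\le\norm{\cE_{U,\fC}(\tilde\rho)-\cE_{U,\fC}(\rho)}\le2\eps$, so with \eqref{eq:punchline} we get $\cE_{U,\fC}(\tilde\rho)_{0,0}\le2^{a-d}+2\eps$; dividing by $1-2\eps'$ (assuming $\eps'<\half$) delivers $\frac{2^{a-d}+2\eps}{1-2\eps'}$.

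I expect the only genuine subtlety to be the contractivity $\d\lpr{\cE_{U,\fC}(\gs),\cE_{U,\fC}(\gs')}\le\d(\gs,\gs')$, which is the data-processing inequality for the trace norm applied to the CPTP map $\gs\mapsto\tr_\fC U\gs U^\dag$; everything else is triangle inequalities, the comparison $\norm{M}\le2\d$ (sharpened to a strict inequality for the traceless difference in part (a)), and the bookkeeping already established in \secref{sec:meas}. A secondary point of care is the reduction to the single fixed target $\ket{0}\bra{0}_\fD$ in the measuring case, where different outcomes could a priori be near different pure states; fixing the target before defining $\cO$ is what makes the corner bound $(S_j)_{0,0}\ge1-2\eps'$ legitimate.
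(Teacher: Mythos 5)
Your proof is correct and reaches the stated bounds, but it routes the key error-propagation step differently from the paper. Where you invoke contractivity of the CPTP map $\gs\mapsto\tr_\fC U\gs U^\dag$ under trace distance (data processing) and then pass to the $(0,0)$ entry via $\abs{M_{0,0}}\le\norm{M}\le\norm{M}_1$, the paper proves a bespoke \lemref{lem:00 error} by diagonalizing the perturbation $\gd=\tilde\rho-\rho$ and bounding the $(0,0)$ entry of $\tr_\fC U[\gd]$ directly as a sub-convex combination of its eigenvalues; both yield $\abs{\cE_{U,\fC}(\tilde\rho)_{0,0}-\cE_{U,\fC}(\rho)_{0,0}}\le2\eps$, and your version is the more standard and more general route (it works for any channel and any unit-norm observable), while the paper's is self-contained and consistent with its entry-level philosophy. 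In part (a) you also reorganize the argument as a triangle inequality centered at $\cE_{U,\fC}(\rho)$ rather than propagating the perturbation into $\cE_{U,\fC}(\tilde\rho)_{0,0}$, and you explicitly dispatch the boundary case $\eps+\eps'=\half-2^{a-d-1}$ by observing that the nonzero traceless Hermitian difference $M$ satisfies $\norm{M}<\norm{M}_1$ strictly --- a point the paper's own proof glosses over, since its derivation literally only establishes sufficiency of the strict inequality $\eps+\eps'<\half-2^{a-d-1}$ while the proposition is stated with $\le$; your treatment is the more careful one here. (Both your argument and the paper's implicitly need $d>a$ for $M\neq0$ and for the hypothesis to be non-vacuous, an assumption present only in the informal statement.) Part (b) is essentially identical to the paper's proof, down to the chain $(1-2\eps')\sum_{j\in\cO}p_j\le\sum_j(T_j)_{0,0}=\cE_{U,\fC}(\tilde\rho)_{0,0}\le2^{a-d}+2\eps$, differing only in how the last inequality is certified; you even correctly sum to $C-1$ where the paper has a typo ($D-1$).
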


We first prove a useful lemma. 
\begin{lemma}\label{lem:00 error}
    Suppose $\gs$ and $\gs'$ are two density matrices on the same set of qubits $\fE$ with $\d(\rho,\gs)<\eps$. Then for any $\fF\subset\fE$, $\abs{\lpr{\tr_\fF(\gs-\gs')}_{0,0}}<2\eps$. 
\end{lemma}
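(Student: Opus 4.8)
The plan is to reduce the statement to a single application of the trace-norm H\"older inequality. Reading the hypothesis as $\d(\gs,\gs')<\eps$ (the conclusion involves only the difference $\gs-\gs'$), set $\Delta=\gs-\gs'$, which is a Hermitian matrix on $\fE$ with $\norm{\Delta}_1=2\d(\gs,\gs')<2\eps$ by the definition of trace distance. The entire task is then to bound the single scalar $\lpr{\tr_\fF\Delta}_{0,0}$ in terms of $\norm{\Delta}_1$, and the bound $2\eps$ is precisely $\norm{\Delta}_1$'s ceiling.

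First I would rewrite the target entry as the trace of $\Delta$ against a fixed observable. Writing $\fG=\fE\setminus\fF$ for the kept register and using the same partial-trace expansion as in \eqref{eq:trace out}, taking the $(0,0)$ matrix element on $\fG$ gives
\[
    \lpr{\tr_\fF\Delta}_{0,0}=\bra{0}_\fG\lpr{\tr_\fF\Delta}\ket{0}_\fG=\sum_k\bra{0}_\fG\bra{k}_\fF\,\Delta\,\ket{0}_\fG\ket{k}_\fF=\tr\lbr{M\Delta},
\]
where $M=\ket{0}\bra{0}_\fG\otimes\id_\fF$ and $k$ ranges over a computational basis of $\fF$. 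The point of this identity is that $M$ is an orthogonal projector (onto the subspace where $\fG$ is in the all-zeros state), so $\norm{M}=1$. This is exactly the entry-level bookkeeping used elsewhere in the paper, and verifying the tensor-factor ordering and that $M$ is genuinely a norm-one projector (rather than a general contraction) is the only place one must take care.

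Then I would invoke the matrix H\"older inequality (trace duality) $\abs{\tr\lbr{M\Delta}}\le\norm{M}\norm{\Delta}_1$ with $\norm{M}=1$, yielding $\abs{\lpr{\tr_\fF\Delta}_{0,0}}\le\norm{\Delta}_1<2\eps$, which is the claim. Equivalently, since $\lpr{\tr_\fF\Delta}_{0,0}$ is a sum of diagonal entries of $\Delta$ indexed by the coordinates $\{(0,k)\}$, one can bound it by $\tr\lbr{P\Delta}$ for the diagonal projector $P$ onto those coordinates and apply the same inequality, avoiding any explicit reference to the partial trace. I expect no genuine obstacle: the content is one H\"older estimate, and the sole subtlety is confirming that the observable dual to the $(0,0)$ partial-trace entry has operator norm exactly $1$.
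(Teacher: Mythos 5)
Your proof is correct, and it shares the paper's key first step---rewriting the $(0,0)$ entry of the partial trace as $\tr\lbr{M\gd}$ for $\gd=\gs-\gs'$ and a projector $M$ that is a rank-one projector on the kept register tensored with the identity on the traced-out register---but it finishes by a genuinely different (and cleaner) route. The paper does not invoke trace duality: it instead diagonalizes $\gd=U^\dag\gL U$ with $\norm{\bgl}_1<2\eps$ and bounds the resulting expression entrywise using partial row norms of the unitary $U$, which amounts to a hand-rolled proof of the H\"older estimate $\abs{\tr\lbr{M\gd}}\le\norm{M}\norm{\gd}_1$ in this special case. Your one-line appeal to H\"older buys robustness and generality (the bound holds for any observable of operator norm at most $1$, not just this particular projector) and sidesteps the explicit coefficient bookkeeping, where the paper's displayed formula $\sum_j\abs{u_{jF,jF}}^2\gl_{jF}$ in fact drops the terms $\abs{u_{k,jF}}^2\gl_k$ with $k\neq jF$; the paper's final bound survives regardless, since $\sum_j\abs{u_{k,jF}}^2\le1$ for each $k$. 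The only care needed in your version is the point you already flag---confirming that $M$ is a norm-one orthogonal projector and that the tensor factors are ordered consistently with which register is traced out---and you correctly read the hypothesis $\d(\rho,\gs)<\eps$ as the evident typo for $\d(\gs,\gs')<\eps$, exactly as the paper's own proof does.
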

\begin{proof}
    Let $\fF$ be of size $f$, $\fG=\fE\backslash\fF$ be of size $g$, $\gd=\gs-\gs'$, without loss of generality $\fF$ is the set of the bottom $f$ qubits in $\fE$, and suppose the computational basis on $\fF$ is enumerated on $\db{F}$. We recall \eqref{eq:trace out}, and notice that
    $$\lpr{\tr_\fF(\gs-\gs')}_{0,0}=\bra{0}_\fF\tr_\fF(\gs-\gs')\ket{0}_\fF=\lpr{1\otimes\bra{0}_\fF}\lpr{\sum\limits_{j=0}^{F-1}\lpr{\bra{j}_\fG\otimes\id_\fF}\gd\lpr{\ket{j}_\fG\otimes\id_\fF}}\lpr{1\otimes\ket{0}_\fF}.$$
    This in turn equals
    $$\sum\limits_{j=0}^{F-1}\bra{j,0}\gd\ket{j,0}=\tr\sum\limits_{j=0}^{F-1}\bra{j,0}\gd\ket{j,0}=\tr\lpr{\gd\cdot\id_\fG\otimes\ket{0}\bra{0}_\fF}=\tr\lpr{(PU)^\dag\gL(PU)}$$
    for $P=\id_\fG\otimes\ket{0}\bra{0}_\fF$ (a projector, so its own square) and the unitary diagonalization $\gd=U^\dag\gL U$ where $\gL=\diag\bgl$ for $\bgl=(\gl_0,\dots,\gl_{N-1})$ satisfying $\norm{\bgl}_1<2\eps$. Write also $U=(u_{j,k})_{j,k\in\db{N}}$; clearly $\abs{u_{j,k}}\le1$ by unitarity. Then we readily compute that
    \begin{align*}
        \lpr{\tr_\fF(\gs-\gs')}_{0,0}=\sum\limits_{j=0}^{G-1}\abs{u_{jF,jF}}^2\gl_{jF}, &&\text{hence}&&
        \abs{\lpr{\tr_\fF(\gs-\gs')}_{0,0}}\le\sum\limits_{j=0}^{G-1}\abs{\gl_{jF}}\le\norm{\bgl}_1<2\eps
    \end{align*}
    and so we are done. 
\end{proof}

\begin{proof}[{Proof of \propref{prop:error}}]
    \begin{enumerate}
        \item We use linearity of $\cE_{U,\fC}$. Let $\tilde{\rho}$ be the $\eps$-close input state, so that $\d(\rho,\tilde{\rho})<\eps$ hence $\tilde{\rho}=\rho+\gd$ where $\norm{\gd}_1<2\eps$. Then, $\cE_{U,\fC}(\tilde{\rho})=\cE_{U,\fC}(\rho)+\cE_{U,\fC}(\gd)$. By unitary invariance of $\d$, i.e.\ $\d(U[\gs],U[\gs'])=\d(\gs,\gs')$ for all density matrices $\gs$ and $\gs'$, and linearity of $U[\cdot]$, we apply \lemref{lem:00 error} and conclude that $\cE_{U,\fC}(\tilde{\rho})<\frac{C}{B}+2\eps$. If this entry is smaller than $1-2\eps'$ then $\cE_{U,\fC}(\tilde{\rho})$ cannot be $\eps'$-close to $\ket{0}\bra{0}_\fD$. Thus it suffices to have $\eps+\eps'<\half-2^{a-d-1}$, with $2^{a-d-1}\le\frac{1}{4}$ always. 
        \item Apply the same reasoning to \eqref{eq:E' to E}: let $\cO\subset\db{C}$ be the set of measurements on $\fC$ which leave the system $\eps'$-close to $\ket{0}\bra{0}_\fD$; $\cO\subset\cO'$ which is the set of measurements for which $S_j\lpr{\tilde{\rho}}>1-2\eps'$. So, 
        $$\bP\lbr{\text{$\cE'_{U,\fC}\lpr{\tilde{\rho}}$ is $\eps$-close to $\ket{0}\bra{0}_\fD$}}\le\sum\limits_{j\in\cO}p_j\lpr{\tilde{\rho}}\le\sum\limits_{j\in\cO'}p_j\lpr{\tilde{\rho}}\le\frac{1}{1-2\eps'}\sum\limits_{j\in\cO'}S_j\lpr{\tilde{\rho}}_{0,0}p_j\lpr{\tilde{\rho}}.$$
        Using more bounds by expanding the index of a positive sum, this in turn is bounded by
        $$\frac{1}{1-2\eps'}\sum\limits_{j=0}^{D-1}S_j\lpr{\tilde{\rho}}_{0,0}p_j\lpr{\tilde{\rho}}=\frac{\cE_{U,\fC}\lpr{\tilde{\rho}}_{0,0}}{1-2\eps'}\le\frac{2^{a-d}+2\eps}{1-2\eps'}$$
        where in the last step we use the previous part of this proof instead of \eqref{eq:punchline}. 
    \end{enumerate}
\end{proof}

\section{Low-temperature Gibbs states \texorpdfstring{(\corref{cor:gibbs''})}{}}\label{sec:gibbs}

We consider now the task of preparing the Gibbs state for a Hamiltonian on a subsystem. Recall that for a Hamiltonian $H$ at inverse temperature $\gb$, the {\em Gibbs state} is 
$$\tG_\gb(H)=\frac{\exp(-\gb H)}{\tr\exp(-\gb H)}$$
(where $\tZ_\gb(H)=\tr\exp(-\gb H)$ is the {\em partition function}). \propref{prop:error} enables us to show a no-go result for Gibbs states of sufficiently low temperature for gapped Hamiltonians. 

We say that a Hamiltonian $H$ is {\em $\g$-gapped} if it has a unique least eigenvalue and all other eigenvalues differ by at least $\g$ from it. Shift $H$ as necessary to have least eigenvalue equal to 0, and notice that this does not change $\tG_\gb$. Then, let $H$'s eigenpairs be $\lpr{\gl,\ket{\psi_\gl}}$. It follows that $\ket{\psi_0}$ is $H$'s ground state. 

The core idea is that $\tG_\infty(H)=\ket{\psi_0}\bra{\psi_0}_\fD$, so $\tG_\gb(H)$ for $\gb\gg1$ must be very close to this pure state (by continuity), hence by \rmkref{rmk:any state} we can use the results of \secref{sec:noise}. 

We first prove a useful lemma. In terms of any $\eps'$ (and the system parameters $a$, $b$, $c$, $d$, and $\g$), we work to establish which $\gb$ enforce $\d\lpr{\tG_\gb(H_\fD),\ket{\psi_0}\bra{\psi_0}_\fD}<\eps'$. 
\begin{lemma}\label{lem:gibbs lemm}
    If $\gb>\frac{d\ln2+\ln\frac{1-\eps'}{\eps'}}{\g}$ then $\d\lpr{\tG_\gb(H_\fD),\ket{\psi_0}\bra{\psi_0}_\fD}<\eps'$.
\end{lemma}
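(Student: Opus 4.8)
The plan is to diagonalize everything in the eigenbasis of $H_\fD$, where both the Gibbs state and the target projector are diagonal, so that the trace distance collapses to a one-line sum over eigenvalues. Write $\tZ=\tZ_\gb(H_\fD)=\sum_\gl e^{-\gb\gl}$. In the orthonormal eigenbasis $\{\ket{\psi_\gl}\}$, the Gibbs state $\tG_\gb(H_\fD)$ has diagonal entries $e^{-\gb\gl}/\tZ$, while $\ket{\psi_0}\bra{\psi_0}_\fD$ has the single nonzero diagonal entry $1$ in the $\gl=0$ slot. Since both matrices are diagonal in the same basis, so is their difference, and the trace distance is just half the $\ell^1$-norm of the diagonal. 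Using $\gl_0=0$ and $\tZ\ge1$, the $\gl=0$ term contributes $\half\lpr{1-\frac{1}{\tZ}}$ and the remaining terms contribute $\half\sum_{\gl>0}e^{-\gb\gl}/\tZ$.

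The key simplification I would then exploit is $\sum_{\gl>0}e^{-\gb\gl}=\tZ-1$, so both contributions equal $\half\lpr{1-\frac{1}{\tZ}}$ and hence
$$\d\lpr{\tG_\gb(H_\fD),\ket{\psi_0}\bra{\psi_0}_\fD}=1-\frac{1}{\tZ}.$$
Thus the desired bound $\d<\eps'$ is exactly equivalent to $\tZ<\frac{1}{1-\eps'}$, i.e.\ to $\tZ-1<\frac{\eps'}{1-\eps'}$. Finally I would bound $\tZ-1=\sum_{\gl>0}e^{-\gb\gl}$ using the two hypotheses on $H_\fD$: it acts on $D=2^d$ dimensions with a unique least eigenvalue, so there are exactly $D-1<2^d$ eigenvalues with $\gl>0$ (counted with multiplicity), and each such $\gl\ge\g$ by the gap assumption. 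This gives $\tZ-1\le(D-1)e^{-\gb\g}<2^d e^{-\gb\g}$, and substituting $\gb>\frac{d\ln2+\ln\frac{1-\eps'}{\eps'}}{\g}$ yields $2^d e^{-\gb\g}<\frac{\eps'}{1-\eps'}$, closing the argument.

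I do not expect a genuine obstacle here, since the estimate is elementary; the only step requiring a little care is verifying that the trace distance collapses exactly to $1-\frac{1}{\tZ}$, which hinges on the shared eigenbasis together with the ground state being nondegenerate (so that the $\gl=0$ contribution is cleanly isolated from the excited terms), and then tracking that the final strict inequality survives the dimension count $D-1<2^d$.
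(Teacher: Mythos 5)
Your proposal is correct and follows essentially the same route as the paper: diagonalize in the eigenbasis of $H_\fD$, observe that the trace distance equals $1-\frac{1}{\tZ_\gb(H_\fD)}$, bound $\tZ_\gb(H_\fD)-1\le(D-1)e^{-\gb\g}$ using the spectral gap, and solve for $\gb$ (with the same slight loosening $D-1<2^d$). The one point you flag for care --- isolating the $\gl=0$ term, which requires $\tZ\ge1$ after shifting the ground energy to $0$ --- is indeed the only subtlety, and you handle it correctly.
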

\begin{proof}
    We compute 
    \begin{align*}
        \tG_\gb(H_\fD)-\ket{\psi_0}\bra{\psi_0}_\fD&=\lpr{\frac{1}{\tZ_\gb(H_\fD)}-1}\ket{\psi_0}\bra{\psi_0}_\fD+\sum\limits_{\gl\neq0}\frac{\exp(-\gb\gl)}{\tZ_\gb(H_\fD)}\ket{\psi_\gl}\bra{\psi_\gl}_\fD\\
        \norm{\tG_\gb(H_\fD)-\ket{\psi_0}\bra{\psi_0}_\fD}_1&=1-\frac{1}{\tZ_\gb(H_\fD)}+\sum\limits_{\gl\neq0}\frac{\exp(-\gb\gl)}{\tZ_\gb(H_\fD)}
            =2-\frac{2}{\tZ_\gb(H_\fD)}.
    \end{align*}
    We are then concerned with enforcing $1-\frac{1}{\tZ_\gb(H_\fD)}<\eps'$, i.e.\ $\tZ_\gb(H_\fD)<\frac{1}{1-\eps'}$. 
    We notice that $\tZ_\gb(H_\fD)\le1+\exp(-\gb\g)(D-1)$ (this is tight absent further spectral information about $H_\fD$) and so $1+\exp(-\gb\g)(D-1)<\frac{1}{1-\eps'}$ is sufficient. We then merely solve for $\gb$. 
\end{proof}
We now analogize \thmref{thm:tr} for a gapped Hamiltonian at low temperature, and give a bound \`a la \corref{cor:meas} where the probability is upper-bounded by a quantity decaying exponentially in the number of additional qubits to be used for the Gibbs state. 
\begin{corollary}[{\corref{cor:gibbs''}, restated formally}]\label{cor:gibbs}
    Given input state $\rho=\ket{0}\bra{0}_\fA\otimes\frac{1}{2^b}\id_\fB$ on the registers $\fA\sqcup\fB$ and a subset $\fC$ of the qubits with complement $\fD$, as well as a $\g$-gapped Hamiltonian $H_\fD$ on $\fD$ and an inverse temperature parameter $\gb$, if $\gb>\frac{d\ln2+\ln3}{\g}$:
    \begin{enumerate}
        \item There does not exist a circuit $U$ such that tracing out $\fC$ leaves $U\tilde{\rho}U^\dag$ in the state $\tG_\gb(H_\fD)$. 
        \item For all circuits $U$, the probability that measuring $\fC$ leaves $U\tilde{\rho}U^\dag$ in the state $\tG_\gb(H_\fD)$ is at most $2^{a-d+1}$. 
    \end{enumerate}
\end{corollary}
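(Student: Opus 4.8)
The plan is to reduce \corref{cor:gibbs} directly to \propref{prop:error} by viewing the target Gibbs state $\tG_\gb(H_\fD)$ as an $\eps'$-close approximation to the pure ground state $\ket{\psi_0}\bra{\psi_0}_\fD$, and then choosing $\eps'$ small enough that the two bounds in \propref{prop:error} specialize to the claimed forms. The key observation, already flagged in the text preceding the statement, is that $\tG_\infty(H_\fD)=\ket{\psi_0}\bra{\psi_0}_\fD$ is pure, so by continuity $\tG_\gb(H_\fD)$ is close to a pure state for large $\gb$; \lemref{lem:gibbs lemm} makes this quantitative. Since the input state here is exactly $\rho$ (not a noisy $\tilde{\rho}$), we are in the special case $\eps=0$ of \propref{prop:error}.

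First I would fix the target precision to $\eps'=\frac14$ and verify that the hypothesis $\gb>\frac{d\ln2+\ln3}{\g}$ is exactly the instance of \lemref{lem:gibbs lemm}'s hypothesis at this value: substituting $\eps'=\frac14$ gives $\ln\frac{1-\eps'}{\eps'}=\ln\frac{3/4}{1/4}=\ln3$, so the stated threshold is precisely $\frac{d\ln2+\ln\frac14^{-1}\cdot3}{\g}$, wait---more carefully, $\frac{d\ln2+\ln3}{\g}$, matching \lemref{lem:gibbs lemm} verbatim. Hence under the corollary's hypothesis we have $\d\lpr{\tG_\gb(H_\fD),\ket{\psi_0}\bra{\psi_0}_\fD}<\frac14$, i.e.\ the Gibbs state is $\frac14$-close to a genuine pure state on $\fD$.

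Next, for part (a), I would invoke \rmkref{rmk:any state} to reduce the arbitrary pure target $\ket{\psi_0}\bra{\psi_0}_\fD$ to $\ket{0}\bra{0}_\fD$ without loss of generality, and then apply \propref{prop:error}(a) with $\eps=0$ and $\eps'=\frac14$: the hypothesis $\eps+\eps'\le\half-2^{a-d-1}$ becomes $\frac14\le\half-2^{a-d-1}$, i.e.\ $2^{a-d-1}\le\frac14$, which holds whenever $d>a$ (equivalently $a-d\le-1$, so $2^{a-d-1}\le2^{-2}=\frac14$). Thus if the machine output were exactly $\tG_\gb(H_\fD)$, it would in particular be $\frac14$-close to a pure state, contradicting \propref{prop:error}(a); this gives the no-go for discarding. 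For part (b), I would apply \propref{prop:error}(b) with the same parameters: the probability bound $\frac{2^{a-d}+2\eps}{1-2\eps'}$ becomes $\frac{2^{a-d}+0}{1-1/2}=2\cdot2^{a-d}=2^{a-d+1}$, exactly the claimed $2^{1+k-n}$ (recalling $k=a$, $n=d$ in the informal statement).

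I do not anticipate a genuine obstacle here, as the argument is essentially bookkeeping: the real content lives in \lemref{lem:gibbs lemm} and \propref{prop:error}, both already established. The one point requiring care is that \propref{prop:error} is stated for synthesizing a state \emph{$\eps'$-close to a pure state}, whereas the corollary asks for synthesizing $\tG_\gb(H_\fD)$ \emph{exactly}; the implication runs in the correct direction because an exact synthesis of $\tG_\gb(H_\fD)$ is a fortiori a synthesis of something $\frac14$-close to the pure state $\ket{\psi_0}\bra{\psi_0}_\fD$, so the impossibility (resp.\ probability bound) for the latter forces the same for the former. The only mild subtlety is ensuring the constant $\frac14$ is the tightest choice making $\frac14\le\half-2^{a-d-1}$ hold for all $d>a$; since $2^{a-d-1}$ is maximized at $d=a+1$ where it equals $\frac14$, the inequality is tight precisely at $d=a+1$, confirming $\eps'=\frac14$ is admissible and explaining the appearance of $\ln3$ in the temperature threshold.
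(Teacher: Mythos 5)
Your proposal is correct and follows exactly the paper's route: the paper's own proof is the one-line instantiation $(\eps,\eps')=(0,\tfrac14)$ of \lemref{lem:gibbs lemm} and \propref{prop:error}, and you have simply written out the bookkeeping (the $\ln3$ threshold, the check $\tfrac14\le\tfrac12-2^{a-d-1}$ for $d>a$, and the evaluation $\frac{2^{a-d}}{1-1/2}=2^{a-d+1}$) that the paper leaves implicit. No gaps.
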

\begin{proof}
    Since we are starting 0-close to $\ket{0}\bra{0}_\fA\otimes\frac{1}{2^b}\id_\fB$, we let $(\eps,\eps')=\lpr{0,\frac{1}{4}}$ and apply \lemref{lem:gibbs lemm} and \propref{prop:error}. 
\end{proof}

\section{An algorithmic application \texorpdfstring{(\corref{cor:cdl''})}{}}\label{sec:cdl}

Recent work including \cite{CDL,MS} has studied the task of preparing various quantum states using iterated single-ancilla algorithms, or generally repeated interaction algorithms, where at each iteration the ancilla qubits are discarded and replaced with a fresh pure qubit. 
Typically, the target state is the ground state of a prescribed Hamiltonian, and the starting state is arbitrary. Indeed, \cite{CDL} attains an algorithm using Lindbladian evolution with a single ancilla with refresh, which may even start from a state which is orthogonal to the target. Using the techniques developed earlier (\thmref{thm:tr}) we give a lower bound on the required number of single-qubit interactions required to prepare a target ground state or any low-temperature target Gibbs state of a system Hamiltonian, when the starting state is maximally-mixed (i.e., without any prior knowledge about the overlap of the initial state and ground state). The lower bound is linear in system size and independent of the Hamiltonian, and is an example of an information-theoretic lower bound on computational complexity of such an algorithm.

\begin{corollary}[{\corref{cor:cdl''}, restated formally}]\label{cor:cdl}
    Consider an algorithm that takes in a single pure ancilla $\ket{0}$ and any $n$-qubit starting state $\gs$. Each step consists of evolving the entire system by some circuit and discarding the ancilla qubit. The result is then given a fresh ancilla qubit $\ket{0}$. After $k$ steps, the result of the discarding is the algorithm's output. If $\gs$ is the maximally-mixed state on $n$ qubits, then this algorithm requires at least $k\ge n$ steps to reach any pure state. If instead there are $a$ pure ancillae then the number of steps required is $k\ge\frac{n}{a}$. 
\end{corollary}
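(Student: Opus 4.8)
The plan is to \emph{unroll} the iterated algorithm into a single circuit and recognize it as a discarding \DQC\ machine, so that \thmref{thm:tr} applies directly. I would first observe that the hypotheses let me defer all of the discarding to the very end. Concretely, suppose the algorithm runs for $k$ steps, and at step $i$ applies some circuit $V_i$ to the current ancilla block (the $a$ fresh pure qubits) together with the $n$-qubit main register, then discards that block and introduces a new one in state $\ket{0}^{\otimes a}$. I claim this is equivalent to the following single-shot process: introduce all $ka$ ancilla qubits at the outset, each in state $\ket{0}$, apply $V_i$ (acting on the $i$-th ancilla block and the main register, and as the identity elsewhere) for $i=1,\dots,k$ in order, and only at the very end trace out all $ka$ ancilla qubits.

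The justification of this equivalence is the one step I expect to require the most care. The key fact is that once the $i$-th ancilla block has been acted on by $V_i$, no subsequent circuit $V_j$ (for $j>i$) touches it, since each $V_j$ only involves the $j$-th block and the main register. Tracing out a subsystem that no later operation interacts with commutes with those later operations, so discarding the $i$-th block immediately after step $i$ yields the same reduced state on the main register as discarding it at the end. Moreover, introducing a fresh $\ket{0}^{\otimes a}$ at step $i+1$ produces exactly the same joint state as having the $(i+1)$-th block sit untouched in $\ket{0}^{\otimes a}$ from the start, precisely because the fresh ancilla is prepared in product with, and independent of, everything computed so far. Iterating this observation across all $k$ steps collapses the iterated procedure into the single circuit $U=V_k\cdots V_1$ followed by one deferred trace-out.

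With the equivalence in hand, the unrolled process is exactly a discarding \DQC\ machine with input $\ket{0}\bra{0}_\fA\otimes\frac{1}{2^n}\id_\fB$, where $\fA$ consists of the $ka$ pure ancilla qubits, $\fB$ is the $n$-qubit maximally-mixed starting register $\gs$, the traced-out register $\fC$ is all $ka$ ancillae, and the output register $\fD$ is the $n$ main qubits. In the notation of \thmref{thm:tr} this means $a=c=ka$ and $b=d=n$. By \thmref{thm:tr}, whenever $d>a$ --- that is, whenever $n>ka$ --- there is no circuit leaving the output register in a pure state. Hence reaching any pure state forces $n\le ka$, i.e.\ $k\ge\frac{n}{a}$; specializing to a single ancilla per step ($a=1$) gives $k\ge n$, as claimed.
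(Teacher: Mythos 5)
Your proposal is correct and takes essentially the same route as the paper: the paper likewise ``unfolds'' the iterated circuit into a single discarding \DQC\ machine with all $ka$ ancillae introduced up front and all discards deferred to the end (\figref{fig:moved}), then invokes \thmref{thm:tr}. Your explicit justification that deferring the trace-outs is harmless (since no later $V_j$ touches an already-used ancilla block) is a detail the paper leaves to the figure, but the argument is the same.
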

Notice that the result is completely agnostic to the structure of the circuits chosen at each step. This immediately gives a lower bound on time and query complexity of the ground state and Gibbs state synthesis frameworks considered in \cite{CDL,MS}.
\begin{proof}
    We simply rewrite \figref{fig:cdl} as \figref{fig:moved}. 
\begin{figure}
    \centering
    \leavevmode
    \Qcircuit @C=1.5em @R=1.5em {
    \lstick{\ket{0}} & \qw & \gate{U_1} & \qw & \qw & \dots & & \qw & \measuretab{\Ground{}} \\
    \lstick{\ket{0}} & \qw & \qw \qwx & \gate{U_2} & \qw & \dots & & \qw & \measuretab{\Ground{}} \\
    \vdots & & \qwx & \qwx \\
    \lstick{\ket{0}} & \qw & \qw \qwx & \qw \qwx & \qw & \dots & & \multigate{1}{U_k} & \measuretab{\Ground{}} \\
    \lstick{\gs} & {/^n} \qw & \gate{U_1} \qwx & \gate{U_2} \qwx & \qw & \dots & & \ghost{U_k} & \qw & \qw
    }
    \captionsetup{width=.75\textwidth}
    \caption{The circuit from \figref{fig:cdl} ``unfolded'' to match the format of Figures \ref{fig:1} and \ref{fig:2}. Here, gates such as $U_1$ and $U_2$ are depicted as ``split'' but are just the corresponding gates from \figref{fig:cdl}, which act specifically on, in the case of $U_1$, the first wire and the last register. }
    \label{fig:moved}
\end{figure}
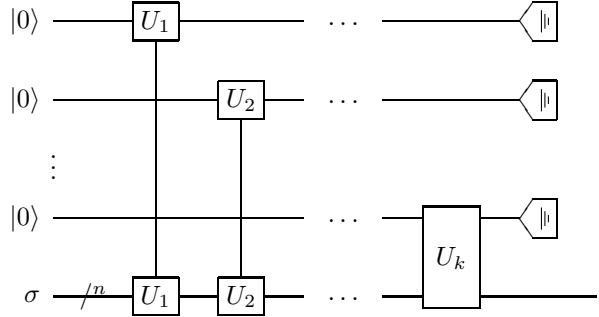
    Note that up to relabeling circuit components this is exactly \figref{fig:1}. The result follows from \thmref{thm:tr}. 
\end{proof}
This bound is evidently saturated in the $a=1$ case by the circuit in \figref{fig:cdl''}. Invoking \propref{prop:error}(a) instead of \thmref{thm:tr} allows the result to be loosened to an output state which is merely $\frac{1}{4}$-close to a pure (or low-temperature Gibbs) state. 

\section{Conclusions and future directions}\label{sec:future}

In this work, we study the problem of pure and Gibbs state synthesis in the One Clean Qubit model and its generalization, with $a$ clean qubits. By studying the action of a circuit purported to prepare such states (after measuring or tracing out), we show that it is impossible (in the case of measuring, with high probability) to prepare the desired state on more than $a$ qubits (and low tempreature, in the case of Gibbs states). This is then applied to the setting of iterated algorithms in the presence of a stream of single pure qubits to show that algorithms of this form require at least (indeed, exactly) $n$ rounds to reach a pure state on $n$ qubits. 

I am grateful to Subhayan Roy Moulik \cite{SRM} for pointing out the elegant alternative proof of \thmref{thm:tr} that follows. 

\begin{proof}[{Entropic proof of \thmref{thm:tr}}]
    Consider the von Neumann entropy of a density matrix $\gs$ on $n$ qubits: $\tH(\gs)=-\tr(\gs\log\gs)$. $\tH$ is additive across tensor factors and $\tH(\gs)=0$ if $\gs$ is a pure state and in general $H(\gs)\le n$. Another immediate property of $\tH$ is invariant under unitary conjugation: $\tH(\gs)=\tH(U[\gs])$. Moreover, due to the Araki--Lieb inequality \cite{AL}, for complementary subsystems of qubits, say $\fC$ and $\fD$, $\tH(\gs)\le \tH\lpr{\tr_\fC\gs}+\tH\lpr{\tr_\fD\gs}$. Let $\rho = \ket{0}\bra{0}_\fA \otimes \frac{1}{2^b}\id_\fB$; then $\tH(\rho)=b$. Putting this all together, for any $U$ and $\fC$ (with its complement $\fD$), when $\tr_\fC U[\rho]$ is a pure state, with zero entropy, we then have
    $$b=\tH(\rho)=\tH(U[\rho])\le\tH\lpr{\tr_\fC U[\rho]}+\tH\lpr{\tr_\fD U[\rho]}=\tH\lpr{\tr_\fD U[\rho]}\le c,$$
    and since $n=a+b=c+d$, it follows $d\le a$. 
\end{proof}

One interesting direction is to find entropic/information-theoretic proofs of \corref{cor:meas}, \propref{prop:error}, and \corref{cor:gibbs}; as matters stand presently, the linear algebra techniques are rather ad hoc and a more systematic approach would be satisfying. 

Another avenue is to apply these techniques to other algorithms for pure state preparation and possibly obtain comparable lower bounds, while we search for settings admitting algorithms running in a sublinear number of rounds. 

One more direction we propose is that these techniques may find use in showing lower bounds for state complexity, the number of basic circuit components required to synthesize a given pure state from a fixed fresh register. 

Finally, we note that as we may deal with state synthesis as well as decision problems in these frameworks, we may explore comparisons between \P, \DQC1 (the complexity class), and \BQP. Perhaps work along the lines presented here is the start to a means by which we can obtain separation between the former and latter of these complexity classes. 

\section*{Acknowledgements}

This work was supported by a NSF graduate research fellowship, grant number DGE 2146752. I am grateful to Subhayan Roy Moulik for many conversations which led to the production of this work as well as for pointing out many interesting questions and features. I am also grateful to Nikhil Srivastava and Matt Tyler for helpful feedback.


\begin{thebibliography}{}
{\footnotesize
\bibitem[Aar'09]{Aar} S.\ Aaronson, ``Quantum copy-protection and quantum money,'' {\em IEEE Conference on Computational Complexity}, 2009. 
\bibitem[AGIK'09]{AGIK} D.\ Aharonov, D.\ Gottesman, S.\ Irani, and J.\ Kempe, ``The power of quantum systems on a line,'' {\em Communications in Mathematical Physics} 287(1), pp.\ 41--65, 2009.
\bibitem[AN'02]{AN} D. Aharonov and T. Naveh, ``Quantum \NP\ -- A Survey,'' \arxiv{quant-ph/0210077}, 2002. 
\bibitem[ADLH'05]{ADLH} A.\ Aspuru-Guzik, A.\ D.\ Dutoi, P.\ J.\ Love, and M.\ Head-Gordon, ``Simulated quantum computation of molecular energies,'' {\em Science} 309(5741), pp.\ 1704--1707, 2005.
\bibitem[AL'70]{AL} H.\ Araki and E.\ H.\ Lieb, ``Entropy Inequalities,'' {\em Communications in Mathematical Physics} 18, pp.\ 160--170, 1970. 
\bibitem[CB'23]{CB} C.-F.\ Chen and F.\ G.\ S.\ L.\ Brand\~ao, ``Fast thermalization from the eigenstate thermalization hypothesis,'' \arxiv{2112.07646}, 2023.
\bibitem[CKBG'23]{CBKG} C.-F.\ Chen, M.\ J.\ Kastoryano, F.\ G.\ S.\ L.\ Brand\~ao, and A.\ Gily\'en, ``Quantum Thermal State Preparation,'' \arxiv{2303.18224}, 2023. 
\bibitem[CL'17]{CL} A.\ M.\ Childs and T.\ Li, ``Efficient simulation of sparse Markovian quantum dynamics,'' {\em Quantum Information and Computation} 17(11\&12), pp.\ 901--947, 2017. 
\bibitem[CW'17]{CW} R.\ Cleve and C.\ Wang, ``Efficient Quantum Algorithms for Simulating Lindblad Evolution,'' {\em International Colloquium on Automata, Languages, and Programming}, 2017. 
\bibitem[Cub'23]{Cub} T.\ S.\ Cubitt, ``Dissipative ground state preparation and the dissipative quantum eigensolver,'' \arxiv{2303.11962}, 2023. 
\bibitem[DCL'23]{CDL} Z.\ Ding, C.-F.\ Chen, and L.\ Lin, ``Single-ancilla ground state preparation via Lindbladians,'' \arxiv{2308.15676}, 2023. 
\bibitem[DLL'24]{DLL} Z.\ Ding, X.\ Li, and L.\ Lin, ``Simulating open quantum systems using Hamiltonian simulations,'' {\em PRX Quantum}, 2024. 
\bibitem[GTC'19]{GTC} Y.\ Ge, J.\ Tura, and J.\ I.\ Cirac, ``Faster ground state preparation and high-precision ground energy estimation with fewer qubits,'' {\em Journal of Mathematical Physics} 60(2), 2019.
\bibitem[GR'02]{GR} L.\ Grover and T.\ Rudolph, ``Creating superpositions that correspond to efficiently integrable probability distributions,'' \arxiv{quant-ph/0208112}, 2002. 
\bibitem[INN$^+$'22]{INN+} S.\ Irani, A.\ Natarajan, C.\ Nirkhe, S.\ Rao, and H.\ Yuen, ``Quantum search-to-decision reductions and the state synthesis problem,'' {\em IEEE Conference on Computational Complexity}, 2022. 
\bibitem[KKR'06]{KKR} J.\ Kempe, A.\ Kitaev, and O.\ Regev, ``The complexity of the local Hamiltonian problem,'' {\em SIAM Journal on Computing} 35(5), pp.\ 1070--1097, 2006.
\bibitem[KSV'02]{KSV} A.\ Y.\ Kitaev, A.\ Shen, and M.\ N.\ Vyalyi, ``Classical and quantum computation,'' {\em Graduate Studies in Mathematics}, American Mathematical Society, 2002. 
\bibitem[KBC$^+$'11]{KBC+} M.\ Kliesch, T.\ Barthel, C.\ Gogolin, M.\ Kastoryano, and J.\ Eisert, ``Dissipative Quantum Church-Turing Theorem,'' {\em Physical Review Letters} 107, 2011. 
\bibitem[KL'98]{KL} E.\ Knill and R.\ LaFlamme, ``On the power of one bit of quantum information,'' {\em Physical Review Letters} 81(25), pp.\ 5672--5675
, 1998. 
\bibitem[LLZ$^+$'23]{LLZ+} S.\ Lee, J.\ Lee, H.\ Zhai, Y.\ Tong, A.\ M.\ Dalzell, A.\ Kumar, P.\ Helms, J.\ Gray, Z.-H.\ Cui, W.\ Liu, M.\ Kastoryano, R.\ Babbush, J.\ Preskill, D.\ R.\ Reichman, E.\ T.\ Campbell, E.\ F.\ Valeev, L.\ Lin, and G.\ K.-L.\ Chan, ``Evaluating the evidence for exponential quantum advantage in ground-state quantum chemistry,'' {\em Nature Communications} 14(1), 2023.
\bibitem[LW'22]{LW} X.\ Li and C.\ Wang, ``Simulating Markovian open quantum systems using higher-order series expansion,''
{\em International Colloquium on Automata, Languages, and Programming}, 2023.
\bibitem[Lin'22]{Lin} L.\ Lin, \href{https://math.berkeley.edu/~linlin/qasc/qasc_notes.pdf}{\em Lecture Notes on Quantum Algorithms for Scientific Computation}, 2022. 
\bibitem[LT'20]{LT} L.\ Lin and Y.\ Tong, ``Near-optimal ground state preparation,'' {\em Quantum} 4, 2020.
\bibitem[OBK$^+$'16]{OBK+} P.\ J.\ J.\ O’Malley, R.\ Babbush, I.\ D.\ Kivlichan, J.\ Romero, J.\ R.\ McClean, R.\ Barends, J.\ Kelly,
P.\ Roushan, A.\ Tranter, N.\ Ding, B.\ Campbell, Y.\ Chen,
Z.\ Chen, B.\ Chiaro, A.\ Dunsworth, A.\ G.\ Fowler, E.\ Jeffrey, E.\ Lucero, A.\ Megrant, J.\ Y.\ Mutus, M.\ Neeley,
C.\ Neill, C.\ Quintana, D.\ Sank, A.\ Vainsencher, J.\ Wenner, T.\ C.\ White, P.\ V.\ Coveney, P.\ J.\ Love, H.\ Neven,
A.\ Aspuru-Guzik, and J.\ M.\ Martinis, ``Scalable quantum
simulation of molecular energies,'' {\em Physical Review X} 6, 2016. 
\bibitem[Ros'24]{Ros} G.\ Rosenthal, ``Efficient Quantum State Synthesis with One Query,'' {\em ACM--SIAM Symposium on Discrete Algorithms}, 2024. 
\bibitem[RFA'24]{RFA} C.\ Rouzé, D.\ S.\ Fran\c{c}a, and A.\ M.\ Alhambra, ``Efficient thermalization and universal quantum computing with quantum Gibbs samplers,'' \arxiv{2403.12691}, 2024. 
\bibitem[RM'23]{SRM} S.\ Roy Moulik, personal communication, 2023. 
\bibitem[She'06]{Shepherd} D.\ J.\ Shepherd, ``Computation with Unitaries and One Pure Qubit,'' \arxiv{quant-ph/0608132}, 2006. 
\bibitem[SJ'08]{SJ} P.\ W.\ Shor and S.\ P.\ Jordan, ``Estimating Jones polynomials is a complete problem for one clean qubit,'' {\em Quantum Information and Computation} 8(8), pp.\ 681--714, 2008. 
\bibitem[SM'23]{MS} O.\ Shtanko and R.\ Movassagh, ``Preparing thermal states on noiseless and noisy programmable quantum processors,'' \arxiv{2112.14688}, 2023. 
\bibitem[TOV$^+$'11]{TOV+} K.\ Temme, T.\ J.\ Osborne, K.\ G.\ Vollbrecht, D.\ Poulin, and F.\ Verstraete, ``Quantum Metropolis sampling,'' {\em Nature} 471(7336), pp.\ 87--90, 2011.
\bibitem[YA'12]{YA} M.-H.\ Yung and A.\ Aspuru-Guzik, ``A quantum--quantum Metropolis algorithm,'' {\em Proceedings of the National Academy of Sciences} 109(3), pp.\ 754--759, 2012.
}
\end{thebibliography}
\end{document}